\DeclareMathAlphabet{\mathcal}{OMS}{cmsy}{m}{n}
\newtheorem{theorem}{Theorem}[section]
\newtheorem{lemma}[theorem]{Lemma}
\mathchardef\mhyphen="2D 
\newtheorem{nclaim}{Claim}
\newcommand\sysone{\textsf{chainBoost}\xspace}
\newcommand\systwo{\textsf{chainScale}\xspace}
\def\shownotes{1}
\newcommand\ghada[1]{\dtcolornote[GA]{magenta}{#1}}
\newcommand\fake[1]{\dtcolornote[FAKE]{red}{#1}}
\newcommand\mo[1]{\dtcolornote[MN]{green}{#1}}
\newcommand\musttodo[1]{\dtcolornote[MUST TO DO]{red}{#1}}
\newcommand\ghada[1]{}
\newcommand\fake[1]{}
\newcommand\mo[1]{}
\newcommand\musttodo[1]{}
\newcommand\sk{\mathsf{sk}\xspace}
\newcommand\pk{\mathsf{pk}\xspace}
\newcommand\cs{S_c\xspace}
\newcommand\lthresh{\theta_l\xspace}
\newcommand\com{C\xspace}
\newcommand\ppt{\mathsf{PPT}\xspace}
\newcommand\led{\mathcal{L}\xspace}
\newcommand\summ{\mathsf{sum}\xspace}
\newcommand\cid{\mathsf{cid}\xspace}
\newcommand\amount{\mathsf{amount}\xspace}
\newcommand\clients{\mathcal{C}\xspace}
\newcommand\servers{\mathcal{S}\xspace}
\newcommand\miners{\mathcal{M}\xspace}
\newcommand{\param}{\ensuremath{\mathtt{pp}}\xspace}
\newcommand\syssetup{\mathsf{SystemSetup}\xspace}
\newcommand\partysetup{\mathsf{PartySetup}\xspace}
\newcommand\stt{\mathsf{state}\xspace}
\newcommand\createTransaction{\mathsf{CreateTx}\xspace}
\newcommand\tx{\mathsf{tx}\xspace}
\newcommand\ask{\mathsf{ask}\xspace}
\newcommand\offer{\mathsf{offer}\xspace}
\newcommand\agreement{\mathsf{agreement}\xspace}
\newcommand\transfer{\mathsf{transfer}\xspace}
\newcommand\dispute{\mathsf{dispute}\xspace}
\newcommand\sync{\mathsf{sync}\xspace}
\newcommand\serviceProof{\mathsf{serviceProof}\xspace}
\newcommand\servicePayment{\mathsf{servicePayment}\xspace}
\newcommand\aux{\mathsf{aux}\xspace}
\newcommand\verifyTransaction{\mathsf{VerifyTx}\xspace}
\newcommand\updateState{\mathsf{UpdateState}\xspace}
\newcommand\mainc{\mathsf{mc}\xspace}
\newcommand\sidec{\mathsf{sc}\xspace}
\newcommand\leader{\mathsf{leader}\xspace}
\newcommand\elect{\mathsf{Elect}\xspace}
\newcommand\createSyncTransaction{\mathsf{CreateSyncTx}\xspace}
\newcommand\verifySyncTransaction{\mathsf{VerifySyncTx}\xspace}
\newcommand\verifyBlock{\mathsf{VerifyBlock}\xspace}
\newcommand\btype{\mathsf{btype}\xspace}
\newcommand\setup{\mathsf{Setup}\xspace}
\newcommand\meta{\mathsf{meta}\xspace}
\newcommand\summary{\mathsf{summary}\xspace}
\newcommand\txtype{\mathsf{txtype}\xspace}
\newcommand\block{\mathsf{B}\xspace}
\newcommand\prune{\mathsf{Prune}\xspace}
\newcommand\sidechain{\mathsf{sc}\xspace}
\newcommand\module{\mathsf{module}\xspace}
\newcommand\modules{\mathsf{modules}\xspace}
\newcommand{\service}{\mathsf{servicePaymentExchange\xspace} }
\newcommand{\marketmatch}{\mathsf{marketMatch\xspace}}
\newcommand\countt{\mathsf{count}\xspace}
\newcommand\payment{\mathsf{payment}\xspace}
\newcommand{\ifcond}[1]{\textbf{if} {#1} \textbf{then}}
\newcommand{\forloop}[1]{\textbf{for} {#1} \textbf{do}}
\newcommand{\elsecond}{\textbf{else}}
\begin{document}

\title{chainScale: Secure Functionality-oriented  Scalability for Decentralized Resource Markets} 

\author{
{\rm Mohamed E. Najd}\\
University of Connecticut \\
menajd@uconn.edu
\and
{\rm Ghada Almashaqbeh}\\
University of Connecticut \\
ghada@uconn.edu
}

\maketitle

\begin{abstract}
Decentralized resource markets are Web 3.0 applications that build open-access platforms for trading digital resources among users without any central management. They promise cost reduction, transparency, and flexible service provision. However, these markets usually have large workload that must be processed in a timely manner, leading to serious scalability problems. Despite the large amount of work on blockchain scalability, existing solutions are ineffective as they do not account for these markets' work models and traffic patterns.

We introduce \systwo, a secure hybrid sidechain-sharding solution that aims to boost throughput of decentralized resource markets and reduce their latency and storage footprint. At its core, \systwo leverages \emph{dependent sidechains} and \textit{functionality-oriented workload splitting} to parallelize traffic processing by having each market module assigned to a sidechain. Different from sharding, \systwo does not incur any cross-sidechain transactions that tend to be costly. \systwo introduces several techniques, including \textit{hierarchical workload sharing} that further sub-divides overloaded modules, and  \textit{weighted miner assignment} that assigns miners with vested interest in the system to critical modules' sidechains. Furthermore, \systwo employs sidechain syncing to maintain the mainchain as the single truth of system state, and pruning to discard stale records. Beside analyzing security, we build a proof-of-concept implementation for a distributed file storage market as a use case. Our experiments show that, compared to a single sidechain-based prior solution, \systwo boosts throughput by 4x and reduces confirmation latency by 5x. Also, they show that \systwo outperforms sharding by 2.5x in throughput and 3.5x in latency. 
\end{abstract}

\section{Introduction}
\label{intro}
The movement towards Web 3.0 aims to reshape the Internet and its digital services by removing centralization. Decentralized resource markets represent a large category of Web 3.0 applications. They are blockchain systems that build platforms for trading digital resources among users, e.g., computation outsourcing~\cite{golem}, file storage~\cite{storj,filecoin}, and video transcoding~\cite{livepeer}. These markets aim to solve the cost, trust, and governance issues of centrally-managed services. For example, conventional content distribution network (CDNs) can offload traffic during peak periods to decentralized CDNs to improve performance~\cite{Anjum17, Karamshuk15}. Also, they mitigate information loss due to provider misconfiguration/availability issues, e.g., Google Drive users lost their data even under Google's multi-region infrastructure~\cite{tomshardwareGoogleDrive}. Resource markets may also offer insights towards useful mining by having a miner's power be tied its service contributions. They also resolve incentive issues of early P2P systems that suffered from freeloading~\cite{feldman2004free, Locher2006free} or required centralized banks to handle payments~\cite{manasse1995millicent,yang2003ppay}.

However, the high potential of these markets is impeded by several limitations. Facilitating collaboration between trustless parties in a monetary-incentivized and open-access model, introduces various design complexities and security issues. Ensuring correct and secure operation usually requires deploying new techniques, such as market management, resource expenditure proofs~\cite{Fisch19, Moran19}, and dispute resolution, which exacerbate the performance issues of these markets. At the same time, digital services must meet particular QoS levels in terms of throughput and latency. Under the large workload they have, having a blockchain makes it harder to meet such requirements; blockchains have scalability issues leading to low throughput and long transaction confirmation delays. These challenges leave resource market designers with a dilemma facing a hard efficiency-security tradeoff, on many occasions compromising on security in favor of efficiency.

\begin{table*}[ht!]
\caption{Comparison with prior work.}
\vspace{-6pt}
\label{tbl:comparison}
\resizebox{\textwidth}{!}{
\begin{threeparttable}

\centering

\begin{tabular}{|l|c|c|c|c|c|c|c|c|c|c|}
\hline
Solution\tnote{*}  & Type & Finality  &\begin{tabular}[c]{@{}c@{}} Workload \\ Distribution \end{tabular} &\begin{tabular}[c]{@{}c@{}} Miner Population \\ Based Scaling  \end{tabular} & \begin{tabular}[c]{@{}c@{}}  Intra-division \\ Scaling \end{tabular} & \begin{tabular}[c]{@{}c@{}}  Cross-division \\ Tx Rate \end{tabular}& Pruning & On-chain Storage \\
\hline
Optimism\cite{Optimism, optimismWithdrawalFlow} & Optimistic rollup & 7 days & N/A          & No                 & No                 & N/A        & No & Batch transcript and state changes  \\
\hline
ZKSync Era\cite{zksync, zksync-era-finality}  & ZK-rollup & 3-24h & N/A         & No                 & No                 & N/A        & Yes & State changes    \\
\hline
Omniledger\cite{Kokoris18} & Random sharding  & 1 block (order of seconds)  & Random & Yes & No  & 90\%+ & Optional  & All transactions / State blocks\tnote{**}  \\
\hline
Rapidchain\cite{zamani2018} & Random sharding  & 1 block (order of seconds)  & Random & Yes & No  & 90\%+ & No  & All transactions  \\
\hline
OptChain\cite{nguyen2019optchain}  & Localized sharding & Blockchain-dependent\tnote{\textdagger}    & Score-based       & Yes                 & No                 & 9.28\%        & No & All transactions               \\
\hline
\sysone\cite{chainboost-paper}  & Dependent sidechain  &     1 meta-block (order of seconds)      & None    & No &     No
&     N/A             & Yes      & \begin{tabular}[c]{@{}c@{}}  State changes (mainchain) \\ summary-blocks (sidechain) \end{tabular}            \\
\hline
\systwo      &  Hybrid sidechain-sharding &   1 meta-block (order of seconds)               & Functionality-based     & Yes                 & Yes                 & 0   \%            & Yes      & Same as \sysone     \\
\hline
\end{tabular}

\begin{tablenotes}
\item[*] Finality is the time needed to consider state changes induced by processed transactions final. Intra-division scaling considers the ability to scale within a rollup/shard/sidechain, and cross-division transactions are transactions involving multiple shards/sidechains. For \sysone, workload distribution is only about sending all service-related traffic to the single sidechain it has. Optchain builds a UTXO transaction graph, and transactions are assigned a PageRank score and transactions with similar scores are assigned to the same shard. For the sharding-based solutions, cross-sharding rates are reported for four shards.

\item[**] When pruning is not used, all transactions are logged in the shards. Otherwise, state blocks containing Merkle tree roots of all transactions are stored in the shard, while clients store Merkle proofs.

\item[\textdagger] Optchain's finality is dependent on the consensus algorithm used in the shards.

\end{tablenotes}

\end{threeparttable}
}

\end{table*}

\subsection{Limitations of Prior Scalability Solutions}
Blockchain scalability is an active research area with solutions that target layer 1 or layer 2. Applying these solutions to resource markets will exacerbate performance and security issues as they do not account for the unique design, QoS requirements, and traffic patterns of these markets, as we discuss below (and a comparison can be found in Table~\ref{tbl:comparison}).

\textbf{Layer 1 solutions.} Apart from changing consensus parameters, such as the use of large block sizes, improving performance of the consensus layer aim in general to increase parallelism by dividing the blockchain into multiple parallel chains~\cite{yu2020ohie,bagaria2019prism} or by accepting parallel blocks~\cite{li2020decentralized,xu2021occam}. These approaches have several limitations; they require a global transaction ordering to manage parallel conflicting transactions, do not fully exploit parallelism as parallel blocks may contain repeated transactions, or do not scale with the increased number of miners, thus limit the effectivity of parallelism. Also, they do not support pruning, thus leading to indefinite growth of blockchain size.

Sharding~\cite{Danezis16,Luu16,gencer2017short,zamani2018,Kokoris18,albassam2018chainspace,wang2019mono,dang2019towards,nguyen2019optchain,tao2020sharding,huang2020repchain,david2022gearbox,hong2021pyramid,zheng2021meepo,xu2022poster,liu2024dynashard} is a popular solution to achieve parallelism. It splits the blockchain into shards and distributes the miners and the system workload among these shards to achieve parallel processing. Earlier sharding schemes are either centralized~\cite{Danezis16}, have fixed shard membership~\cite{wang2019mono}, target permissioned blockchains~\cite{zheng2021meepo}, or suffer from security issues---shards are assigned small miner committees that can be taken over by the adversary~\cite{Luu16}. The works~\cite{Kokoris18, zamani2018} address these issues via decentralized 
large enough shard assignment. However, they randomly assign transactions to shards leading to a high volume of cross-shard transactions (more than 90\% of the transactions~\cite{mizrahi2020blockchain, Kokoris18,zamani2018,nguyen2019optchain}). Cross-shard transactions are complex, introduce long delays, and often require end users to handle them~\cite{Kokoris18}, making these solutions unsuitable for resource markets. Solutions that improve transaction distribution locality~\cite{nguyen2019optchain,li2022achieving, tao2024throughput,gao2022pshard,hong2021pyramid} create imbalanced shards, with overloaded shards become performance bottlenecks. Lastly,~\cite{liu2024dynashard} discussed sub-shard splitting based on current workload, but the design is high level and requires a highly-interactive protocol to handle conflicted transactions.

\textbf{Layer 2 solutions---rollups.} In optimistic rollups~\cite{kalodner2018arbitrum, bousfield2022arbitrum, Optimism}, parties process transaction batches off-chain while submitting results on-chain. Verifiers check these results and challenge incorrect ones. Optimistic rollups have long contestation periods, reaching one week, thus severely impact performance since results are not final until this period is over. Furthermore, they may use trusted verifiers~\cite{op-verifier,arbitrumStateArbitrums}, while incentivizing untrusted verifiers is an open question~\cite{li2023security, mediumCheaterChecking, mediumOptimisticRollup}---meaning that verifiers are not incentivized to vet/challenge the results, leading to to adopting invalid state changes. Moreover, data availability in optimistic rollups requires recording the full transaction batch on-chain, thus they do not cut the storage cost.\footnote{EIP-4844~\cite{eip-4844} in Ethereum aims to place transaction batches in blobs that are pruned after a few weeks once the contestation period is over.} They also suffer from DoS attacks against fraud-proofs (challenging incorrect results) that prevent their processing during the contestation period~\cite{koegl2023attacks}.  

Zero-knowledge (ZK) rollups~\cite{Bowe20,bonneau2020coda,liu2024pianist} generate ZK proofs attesting for batch processing correctness. Proof generation is costly, takes on the order of minutes, which becomes worse when attesting to complex transactions~\cite{ernstberger2024zk,celer2024zk}. ZK-rollups also have high confirmation delays that may reach up to 24 hours~\cite{zksync,zknationZIP4Reduce}, and require high-performance hardware---operating a node can cost 1365 USD per month~\cite{chaliasos2024analyzing}. Finally, for data availability, ZK-rollups either store the batch on-chain~\cite{polygonArchitecturePolygon} or avoid that at the expense of more expensive ZK proofs~\cite{chaliasos2024analyzing, zksyncDataAvailability}. All these factors make ZK-rollups unsuitable for resource markets as they impact QoS support.

\textbf{Layer 2 solutions---sidechains.} Sidechains~\cite{Back14, Gavzi19,Kiayias19,Connor17,Garoffolo18,garoffolo2020zendoo,ranchal2019platypus,lee2021hierarchical, gai2021cumulus, Rovzman21, baudet2020fastpay} operate a secondary blockchain in parallel to the mainchain. Despite their potential in improving scalability, most prior works focused on currency transfer, or two-way peg, between the side and main chains. Moreover, all of them employ independent sidechains, where each chain has its own miners, transactions, and network protocol. This limits the utility of sidechains and prevents workload sharing and arbitrary data exchange between the two chains. Furthermore, none of these solutions support blockchain pruning.

\emph{Dependent sidechains.} To address these limitations,~\cite{chainboost-paper} proposed a new sidechain architecture called \sysone. This sidechain has mutual-dependency relationship with the mainchain. All service-related traffic is offloaded to the sidechain, which processes this traffic (into temporary blocks) and summarize it to produce concise state changes. These summaries are used to sync the mainchain, so once the syncing is confirmed, the temporary blocks are pruned. \sysone reduces the workload of the mainchain, and allows for significant throughput gains, latency and blockchain size reduction. However, \sysone operates a \emph{single sidechain}. Under heavy market workload, this sidechain will be overwhelmed, thus degrading performance. Also, \sysone does not scale with the number of miners; for every epoch a committee of the mainchain miners manages the sidechain while the rest of the miners do not contribute to workload processing.

\textbf{An open question.} Thus, we ask the following question: \emph{Can we build a scalability framework that enables parallel processing in resource markets while accounting for their design and traffic patterns, and without impacting the correct and secure operation of these markets?}

\subsection{Our Contributions}
We answer this question in the affirmative and introduce \systwo, a hybrid sharding-sidechain solution. \systwo divides the market into operational modules, each of which is assigned to a sidechain, and distributes the miners and the service-related workload among these sidechains. Different from sharding, \systwo distributes this load in an \emph{adaptive functionality-oriented way}, thus simplifying coordination and eliminating expensive cross-sidechain transactions. In particular, we make the following contributions.

\textbf{System design.} We introduce a novel \textit{hybrid sharding-sidechain architecture} that enables parallel processing of the resource market's workload via an \emph{adaptive functionality-oriented} approach. The underlying resource market is divided into modules, each of which represents a functional unit, and is assigned its own sidechain. Thus, the workload of the market is distributed among these sidechains based on the functionality they handle. For example, all transactions related to service terms agreements are assigned to the market matching module's sidechain, and all transactions related to service delivery and payments are assigned to the service-payment exchange module's sidechain, and so on. 

Moreover, we introduce a \textit{hierarchical intra-module workload sharing} that subdivides overloaded modules, by having sub-sidechains, allowing for parallel processing even within the same module. With this functionality-oriented workload splitting, transactions are contained within their modules, thus \emph{eliminating cross-sidechain transactions}, which significantly boosts performance. In turn, the miners are distributed among these sidechains/sub-sidechains, allowing for scaling further as the number of miners increases in the system.

\systwo retains the underlying features of the sidechain architecture of~\cite{chainboost-paper} mentioned earlier. These include managing the sidechain using a practical Byzantine fault-tolerant (PBFT)-based consensus to speed up agreement and reduce confirmation delays; maintaining the mainchain as the single truth for system state via sidechain syncing; and sidechain pruning that significantly reduces the size of the sidechain and subsequently the mainchain. Furthermore, a transaction is considered final once appears in a temporary block on a sidechain, and transaction processing correctness is guaranteed by the security of consensus instead of specialized, high overhead, techniques (e.g., ZK proofs) found in prior work.

To promote resilience to security threats within the sidechains, we introduce a \textit{weighted miner assignment} that assigns miners with high positive participation to sidechains managing critical modules. This reduces the chances of interruptions, caused by miner misbehavior, in these modules. Additionally, \systwo is equipped with an \textit{autorecovery protocol} that takes into consideration impacted modules during interruptions and shields the system from processing traffic when critical information is missing, thus allowing safe recovery. 

\textbf{Security analysis.} We analyze the security of \systwo, showing that it preserves the correct and secure operation (liveness and safety) of the underlying resource market. 

\textbf{Implementation and evaluation.} To assess the performance gains that \systwo can achieve, we provide a proof-of-concept implementation, and conduct various experiments for a file storage market as a use case. Our results show that \systwo achieves up to 4x increase in throughput and 5x reduction in confirmation latency when compared to \sysone. It also achieves a 2.5x improvement in throughput, 3.5x reduction in confirmation latency, and a 3x reduction in storage size when compared to sharding.

Lastly, we note that, to the best of our knowledge, \systwo is the first to combine the parallelism of sharding and the features of dependent sidechains while achieving the best of both worlds. Coupled with its robustness and interruption resilience, our work is a leap forward towards making decentralized resource markets, and even other types of Web 3.0 systems, practically viable without compromising security.

\section{Background}
\label{sec:background}

\noindent\textbf{Decentralized resource markets.} 
These are blockchain systems that offer digital services on top of the currency exchange medium. Users who can provide the service join as servers, while those who need the service join as clients, so resources are being traded among them for cryptocurrency payments. There are numerous deployments in practice, such as Filecoin~\cite{filecoin} for file storage, Livepeer for video transcoding~\cite{livepeer}, and Session~\cite{session} for instant messaging.

A decentralized resource market consists of several modules: market matching, service-payment exchange, and dispute resolution. Servers submit offers detailing the service parameters they can provide, e.g., amount, duration, and prices, while clients submit asks detailing their service needs. Based on these offerings/asks, clients and servers are matched; this may include negotiations over service terms. After that, the service-payment exchange phase starts, during which servers serve clients and clients pay for the delivered service. To avoid collecting payment without doing the promised work, servers submit service delivery proofs in order to be paid, e.g., proof of storage~\cite{Moran19,Fisch19}. Also, to protect against clients who do not pay after being served, clients deposit payments in advance in an escrow that the miners use to dispense payments to the servers based on the valid proofs recorded.

Several security threats may arise due to having monetary incentives, where many of these misbehaviors cannot be mitigated in a cryptographic way. Instead, financial security countermeasures are employed via a dispute resolution module. That is, whereupon detecting a misbehavior, a proof-of-cheating is submitted. Valid cheating claims result in financial penalties for the misbehaving parties, where such penalties may involve forfeiting a penalty deposit or blacklisting, along with proper compensations for impacted parties. 

\vspace{4pt}
\noindent\textbf{The \sysone framework.} \sysone~\cite{chainboost-paper} introduces a new sidechain architecture targeting decentralized resource markets to boost their throughput, reduce confirmation delays and blockchain size. It achieves these goals via having a mutual-dependency relationship between the mainchain and the sidechain, thus enabling workload sharing and arbitrary data exchange between the two chains. That is, these chains have the same network protocol, transaction format, and miner population, so they live in the same domain. Furthermore, their security and correct operation depend on each other.

All service-related transactions that can be summarized are processed on the sidechain, while the rest stay on the mainchain. The main and side chains operate in parallel and managed by the same miners. The sidechain is composed of two types of blocks: temporary meta-blocks that contain all processed transactions in each round, and permanent summary-blocks that contain summaries of the state changes induced by meta-blocks. At the beginning of an epoch (an epoch is $\omega$ contiguous rounds, and a round is the time during which a block is produced), a committee from the mainchain miners is elected to manage the sidechain. In every round in an epoch, the committee produces a meta-block, while in the last round of this epoch, it produces a summary-block. Based on the summary-block, the committee issues a sync-transaction to sync the mainchain according to these summaries. Once this transaction is confirmed on the mainchain, all corresponding meta-blocks in that epoch are discarded, allowing for a significant reduction of the sidechain and, subsequently, the mainchain size. The permanent summary-blocks support public verifiability as they attest to the validity of state changes.

\sysone does not restrict the mainchain to a particular consensus protocol; any secure consensus mechanism can be used. For the sidechain, and since service-related traffic must be processed in a timely manner, \sysone employs a practical Byzantine fault tolerance (PBFT)-based consensus to speed up agreement. Any secure PBFT-based protocol with a secure committee election mechanism can be used. Accordingly, any transaction that appears in a meta-block is considered final. Lastly, \sysone introduces an autorecovery protocol to handle interruptions on both chains, thus guaranteeing service operation continuity.

As shown in~\cite{chainboost-paper}, \sysone achieves significant performance gains in terms of throughput, latency, and blockchain size. Nonetheless, having one sidechain would lead to a performance bottleneck as explained earlier.

\section{Preliminaries}
\label{sec:prelim}
We adopt similar system, security, and adversary models as in~\cite{chainboost-paper}. For completeness, we review them here while adding the modifications needed to capture \systwo setting. 

\vspace{3pt}
\noindent\textbf{Notation.} $\lambda$ denotes the security parameter, $\param$ denotes the system public parameters, $\led$ denotes a ledger (or blockchain), with $\led_{\mainc}$ denotes the mainchain ledger and $\led_{\sidec, i}$ denotes the ledger of the $i^{th}$ sidechain, $\tx$ denotes a transaction, and $\ppt$ stands for probabilistic polynomial time. Lastly, each participant maintains a secret and public key, $\sk$ and $\pk$, respectively.

\vspace{3pt}
\noindent\textbf{System model.} A resource market is an open-access system that allows anyone to join and leave at any time, and is powered by a permissionless public blockchain. Parties are identified by their public keys, and are divided into clients $\clients$, servers $\servers$, and miners $\miners$ where the miners establish Sybil-resistant identities based on their mining power. We present a system model for a minimum viable, extensible, resource market composed of three modules: market matching, service-payment exchange, and dispute resolution. This market offers functionalities related to system and party setup, transaction issuance and verification, as well as chain extension.

\systwo introduces several sidechains, each representing a resource market module. Each sidechain is managed by an epoch committee elected from the mainchain miners that runs a PBFT-based consensus protocol. The structure and operations of each sidechain are similar to \sysone. The difference is that service-related transactions are split among these sidechains based on the modules they manage. Furthermore, to simplify the presentation, as in~\cite{chainboost-paper}, we adopt a leader-based PBFT in which a leader proposes a block for the committee to agree on (as in~\cite{Kogias16})---nonetheless, voting-based PBFT (as in~\cite{Gilad17}) can be used instead. Due to space limitation, we defer the formalism capturing the market model and the abstract functionalities of \systwo to Appendix~\ref{apdx:system-model}.

\vspace{3pt}
\noindent\textbf{Security model.} We aim to build a secure solution that preserves correctness and security of the underlying resource market. 
So, starting with a secure resource market operating a secure blockchain (that satisfies liveness and safety~\cite{garay2015bitcoin}), deploying \systwo preserves the invariant that the market remains secure and processes its traffic based on its original logic. Beside ledger security, and as studied in~\cite{abc}, a secure market must address all threats related to the service, including service theft (a client does not pay for a received service), service slacking (a server collects payments without offering services), service corruption (a server delivers a corrupt service), and DoS attacks against clients and servers (i.e., avoid publishing transactions issued by particular parties).

\begin{figure*}[t!]
\begin{center}
\includegraphics[height= 2.1 in, width = 1.9\columnwidth]{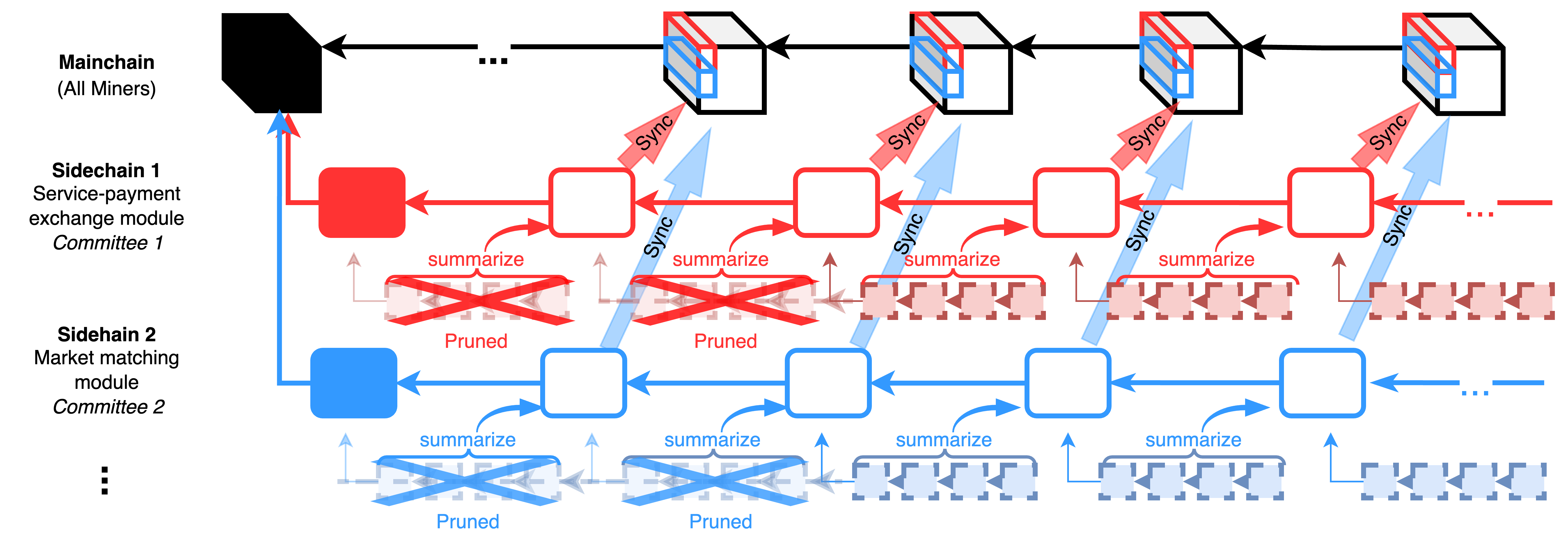}
\end{center}
\vspace{-10pt}
\caption{Overview of \systwo architecture.}
\label{chainscale-diag}
\end{figure*}

\vspace{3pt}
\noindent\textbf{Adversary model.} We assume the underlying resource market and its blockchain to be secure. For the sidechains, we account for three miner behaviors: honest, malicious who deviate arbitrarily, and honest-but-lazy who passively collaborate with the adversary, e.g., accepting transactions/blocks without validation. The adversary can corrupt miners but without going above the corruption threshold dictated by the consensus protocol. As \systwo adopts a PBFT-based consensus, the threshold of malicious miners in a committee cannot exceed $f$, where the committee size is $3f+2$ and $2f+2$ votes are needed to reach agreement. Similar to prior work~\cite{Kokoris18,Kogias16,garay2015bitcoin,pass2017fruitchains,pass2017analysis,abraham2020sync}, we assume bounded message delivery, so any message will be delivered within $\Delta$ time delay. As we deal with public blockchains, the adversary can see all messages, reorder and delay them within $\Delta$ delay. The adversary is slowly adaptive~\cite{avarikioti2023divide}; it can corrupt miners only at the beginning of an epoch. Lastly, we deal with $\ppt$ adversaries.

\section{\systwo Design}
\label{sec:design}

\subsection{Overview}
\label{subsec:overview}
\systwo is a hybrid sharding-sidechain scalability framework that achieves the best of both worlds; supporting the high parallelism promised by sharding via leveraging workload sharing, arbitrary data exchange, and pruning offered by dependent-sidechains. \systwo adopts a functionality-oriented way for parallelism. It splits the resource market into logical functional modules, and assigns each module to a sidechain that handles the transactions related to this module. By doing so, \systwo eliminates the costly cross-sidechain transactions, a problem that hampered sharding solutions that randomly distribute the workload among shards.

As shown in Figure~\ref{chainscale-diag}, the mainchain and sidechains in \systwo work in parallel. A sidechain operates as in \sysone, producing meta and summary blocks. A sidechain is assigned a module and receives all traffic (that can be summarized) belonging to this module. Thus, a transaction has a prefix specifying its destination chain---the mainchain or a particular sidechain. For example, a service delivery proof is assigned to the service-payment exchange module, a service-contract proposal is assigned to the market matching module, while payment escrow creation transactions are processed on the mainchain as these transactions cannot be summarized. 

This \emph{functionality-oriented workload splitting}, combined with the right module division, prevents dependencies between the sidechains, and thus, eliminates cross-sidechain transactions. System designers identify these modules and the associated transaction types during the setup phase. Based on that, the required number of sidechains, as well as module and traffic assignment, will be configured. System designers need to keep in mind that introducing a very granular module division may introduce cross-module dependencies, and less granularity may lead to overloaded mixed-module sidechains.

During each epoch, a committee from the mainchain miners is elected for each sidechain. Each committee runs a PBFT-based consensus protocol to produce a new meta-block recording all processed transactions in a round, and a summary-block during the last round of an epoch summarizing the state changes induced by the meta-blocks. Then, this committee issues a sync-transaction to sync the mainchain based on the summaries. Thanks to module splitting, sync-transactions of the sidechains in \systwo can be processed independently. Also, since the mainchain stores the system state, any module sidechain can retrieve the needed state changes from the mainchain. Once a sync-transaction is confirmed on the mainchain, the corresponding epoch meta-blocks on the corresponding sidechain are dropped. Correctness of layer 2 (i.e., sidechain) transaction processing is established by the security of consensus; the sidechain committee processes its workload using the original logic of the resource markets and only agrees on valid state changes. No need for additional costly mechanisms such as ZK proofs or contestation periods as in rollups.

To address the case of overloaded modules, \systwo introduces a \emph{hierarchical intra-module workload sharing} that employs the sharding-sidechain hybrid paradigm within the same module. For example, it is expected that the service-payment exchange module, whose traffic scales linearly with the number of servers/clients, would be overloaded. \systwo spins out sub-sidechains that process the module's traffic in parallel, with each sub-sidechain managed by a sub-committee. At the end of the epoch, the results of all sub-sidechains are summarized into one summary-block, and hence, only one sync-transaction is issued.

\systwo operates sidechains that have a mutual-dependency relationship with the mainchain. Thus, interruptions on any of these sidechains may impact the whole system. To account for the fact that some modules are more critical than others, and so their interruptions are more devastating, \systwo introduces a \emph{weighted miner assignment} mechanism. That is, committee election is module-based and employs a modified cryptographic sortition that accounts for miners' positive participation in the system and distributes these miners across the sidechains based on module criticality. This ensures that miners with high involvement, who are less likely to misbehave, are assigned to critical modules. Moreover, we extend the \textit{autorecovery protocol} of \sysone to allow interrupted sidechains to recover without impacting other sidechains. In particular, non-critical modules monitor interruptions of critical module sidechains and wait for them to recover before resuming regular operation.

\subsection{Technical Details}
\label{subsec:tech}

\begin{figure}[t!]
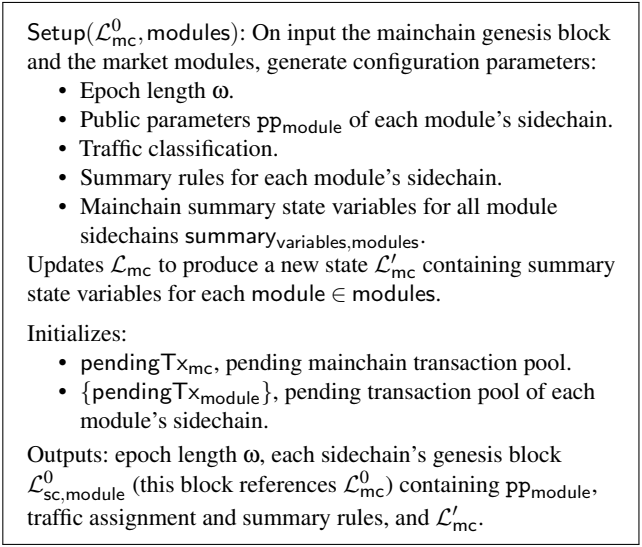

\raggedright
\begin{framed}
 \vspace{-3pt}
\small{
$\setup(\led_{\mainc}^0, \modules)$: On input the mainchain genesis block and the market modules, generate configuration parameters:
 \begin{itemize}[noitemsep,topsep=0pt]
     \item Epoch length $\omega$.   
     \item Public parameters $\param_{\module}$ of each module's sidechain.
     \item Traffic classification.
     \item Summary rules for each module's sidechain.
     \item Mainchain summary state variables for all module sidechains $\mathsf{summary_{variables, \modules}}$.
 \end{itemize}
 
 Updates $\led_{\mainc}$ to produce a new state $\led_{\mainc}'$ containing summary state variables for each $\module \in \modules$.
 
 \vspace{4pt}
 Initializes:
 \begin{itemize}[noitemsep,topsep=0pt]     
     \item $\mathsf{pendingTx_{\mainc}}$, pending mainchain transaction pool.
     \item  $\mathsf{\{pendingTx_{\module}\}}$, pending transaction pool of each module's sidechain.
 \end{itemize}

\vspace{2pt}
 Outputs: epoch length $\omega$, each sidechain's genesis block $\led_{\sidec, \module}^0$ (this block references $\led_{\mainc}^0$) containing $\param_{\module}$, traffic assignment and summary rules, and $\led_{\mainc}'$.
 }
 \vspace{-4pt}
\end{framed}
\vspace{-6pt}
\caption{System setup.} 
\vspace{-4pt}
\label{fig:syssetup}
\end{figure}

\subsubsection{Setup Phase} 
\systwo is a service-type agnostic scalability framework; it can be used for any resource market regardless of the service type it offers. At the same time, \systwo's generic architecture can be customized based on the market's modules and traffic patters. In the setup phase (Figure~\ref{fig:syssetup}), system designers determine the market's modules and traffic split among these modules. Each module will be assigned a sidechain that will process the transactions related to its module. System designers also specify how these transactions should be summarized and define the mainchain state variables that will be synced based on these summaries. All these factors, in addition to the epoch length, round duration, and block sizes, are set in each sidechain public parameters $\param_{\sidechain}$. In setting up round duration, system designers must account for block propagation delays and the time needed to reach a consensus over a block. While for setting the epoch length, they must account for how frequently the mainchain will be synced. Long epoch duration reduces the number of sync-transactions, but leads to a less frequently updated mainchain state.

\textbf{Module and traffic division.} We introduce a functionality-oriented approach to specify the market's modules. Each module represents a well-defined part of the market that achieves a particular purpose. For our abstract model, these include market matching, service-payment exchange, and dispute resolution. All service-related transactions under each module are directed to the sidechain assigned to that module. For example, $\tx_\ask$ and $\tx_\offer$ would pertain to the market matching module, $\tx_\serviceProof$ and $\tx_\servicePayment$ belong to the service-payment exchange module, and $\tx_\dispute$ goes into the dispute module. The level of module granularity highly impacts system performance. For example, splitting the market matching module into two modules, one that handles $\tx_\ask$ and another that handles $\tx_\offer$, would lead to high volumes of cross-sidechain transactions, which are costly. At the same time, merging, for example, the market matching and the dispute resolution modules into one would lead to an overloaded sidechain.

\begin{figure}[t!]
    \centering
    \includegraphics[height= 0.7 in, width = 1.0\columnwidth]{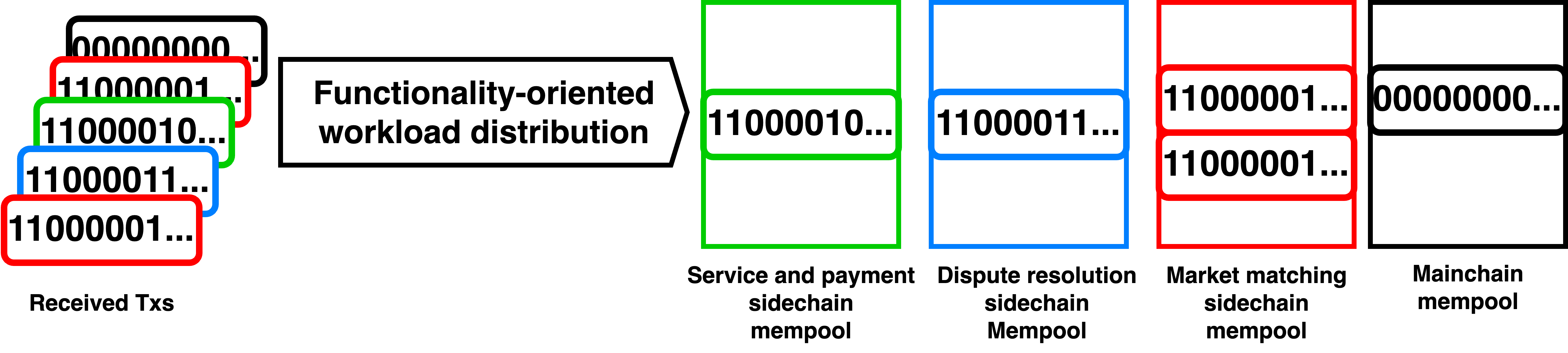}
    \vspace{-15pt}
    \caption{Annotated functionality-oriented load splitting.}
    \vspace{-4pt}
    \label{fig:load-div}
\end{figure}

\textbf{Traffic classification.} We use a simple static approach, in which the header of each transaction will have an extra field indicating the destination chain. This field includes a chain indicator, stating whether a transaction must be processed by the mainchain or one of the sidechains, and if it is the latter, there will be a module indicator, stating the destination module's sidechain. For example, with one byte prefix, if the first 2 bits are 00, then this is a mainchain transaction, while 11 means a sidechain transaction with the next 6 bits indicate which sidechain, e.g., 000001 can be used for market matching sidechain, and so on. Tying this to our abstract market model, as shown in Figure~\ref{fig:load-div}, $\tx_\ask$, $\tx_\offer$ and $\tx_\agreement$ will be annotated as 11000001, $\tx_\serviceProof$, $\tx_\servicePayment$ will be annotated as 11000010, $\tx_\dispute$ uses 11000011, while mainchain transactions will have  the prefix as 00000000.

\textbf{Summary rules.} Transaction summary rules in \systwo are similar to those of \sysone. For example, $\tx_\servicePayment$ and $\tx_\serviceProof$ are summarized by counting the number of proofs a server submitted in an epoch and the total payment of this server, $\tx_\dispute$ are summarized by listing the outcome of a dispute and its proof, while $\tx_\ask$ and $\tx_\offer$ are summarized by the finalized agreements they produce. The difference is that now summaries are fragmented among the modules' sidechains based on the traffic they handle. Figure~\ref{fig:summary-rules} presents the summary rules for the abstract resource market from Section~\ref{sec:prelim}, which can be extended and adapted to match any additional modules defined during system setup.

\begin{figure}[t!]
\raggedright
\begin{framed}
\vspace{-4pt}
\small{
Input: Epoch meta-blocks $\block^{\meta,1}_{\module}, \dots, \block^{\meta,n}_{\module}$ for $\module \in \{\service, \dispute,\marketmatch\}$.

\vspace{2pt}
Initialize a summary structure $\summ_\module$ for each module.

\vspace{2pt}
\forloop{$i \in \{1, \dots, n\}$ and every $\tx \in \block^{\meta,i}_{\module}$}

\;\;\;\;\ifcond{ $\module = \service$}

\;\;\;\;\;\;\ifcond{ $\tx.\txtype = \tx_{\serviceProof}$ }

\;\;\;\;\;\;\; // $\cid$ is the service contract ID

\;\;\;\;\;\;\;\; ++ $\summ_\module[\tx.\cid].\countt$

\;\;\;\;\;\;\elsecond \ifcond{ $\tx.\txtype = \tx_{\servicePayment}$ }

\;\;\;\;\;\;\;\; $\summ_\module[\tx.\cid].\payment$ += $\tx.\amount$

\;\;\;\;\elsecond \ifcond{ $\module = \dispute$ }

\;\;\;\;\;\; $\summ_\module[\tx.\cid] = (\tx\mathsf{.proof}, \tx\mathsf{.outcome})$ 

\;\;\;\; // $\tx_{\agreement}$ references $\tx_{\ask}$ and $\tx_{\offer}$, so the summary 

\;\;\;\;\;// covers these as well ($\mathsf{s}$ is the server and $\mathsf{cl}$ is the client)

\;\;\;\;\elsecond \ifcond{ $\module = \marketmatch  $}

\;\;\;\;\;\; $\summ_\module[\tx.\cid]$ = $(\tx\mathsf{.s}, \tx\mathsf{.cl},\tx\mathsf{.terms})$

Output $\summ_\module$ for all modules.
}
\vspace{-4pt}
\end{framed}
\vspace{-6pt}
\caption{Summary rules (assuming one server per service contract. If a set of servers is assigned per contract, then contract indexing should indicate the particular server ID).} 
\vspace{-4pt}
\label{fig:summary-rules}
\end{figure}

\subsubsection{Sidechain Management}
Similar to \sysone, for each epoch and for each module sidechain, a committee of the mainchain miners is elected to manage this sidechain. It runs a leader-based PBFT protocol to agree on meta-blocks and summary-blocks (the latter are based on the summary rules introduced earlier), and issue syn-transactions as discussed before. Also, once a sync-transaction is confirmed on the mainchain, all corresponding sidechain meta-blocks are pruned.

Different from \sysone, due to having multiple sidechains managing different modules, \systwo recognizes that some modules are more critical than others in their impact on the whole market operation. For example, the sidechain managing $\tx_\dispute$ is critical as it decides the sanctions for misbehaving parties and affects their participation in providing/receiving service. As a result, \systwo offers \emph{two modes of operation} with respect to miner assignment to module sidechain; default mode and a weighted mode.

\textbf{Default miner assignment.} As in sharding-based solutions~\cite{Kokoris18, zamani2018}, \sysone~\cite{chainboost-paper}, and other PBFT-solutions in general, sidechain committees are elected at random from the mainchain miners. The size of this committee must be large enough to guarantee with overwhelming probability that at most $f$ miners in a committee of size $3f+2$ can be malicious. The lower bound for the committee size to support this guarantee for the default mode is the same as in~\cite{chainboost-paper}.

\textbf{Weighted miner assignment.} We observe that miners would have different vested interest in the system such that those with large involvement are more incentivized to behave honestly than miners of less involvement. By assigning highly involved miners to critical modules, we can reduce the chances of interruptions and even have fast recovery if interruptions happen. We devise a \emph{weighted miner committee election} that accounts for the involvement of the miners in the system. It allocates a plurality of the critical sidechain committee memberships to miners with positive participation, while less critical committees will be more balanced, i.e., have same number of miners from all classes on expectation. 

\begin{figure}[t!]
\begin{center}
\includegraphics[width = 0.85\columnwidth]{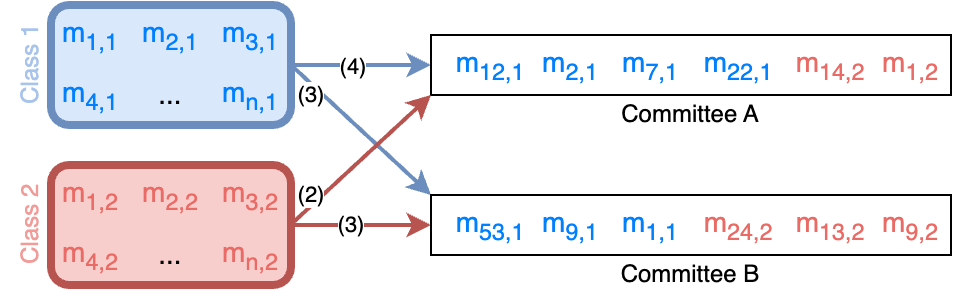}
\end{center}
\vspace{-8pt}
\caption{An example of weighted miner election.}
\label{miner-div-new}
\end{figure}

The weighted mode proceeds as follows. Each miner calculates a score $s$ by combining its mining power $P_m$, previous participation in sidechain consensus $C_s$, and the disputes it was involved in $D$, as $s = \alpha P_m + \beta C_s - \gamma D$, where $\alpha$, $\beta$ and $\gamma$ are weight coefficients determined by system designers such that $\alpha + \beta + \gamma = 1$. This score is then used to rank the $N$ miners in the system into $C$ classes of equal sizes, such that $C \geq |\modules|$, by having the highest scoring $\mu = \lfloor\frac{N}{C}\rfloor$ miners be class 1, the following $\mu$ miners in class 2, and so on. Then, each module's sidechain will be randomly assigned a number of miners $n_{c, \module}$ from each class $c$, based on its priority (we provide an outline on how this is done below). We note that the score is based on public values from the mainchain, allowing anyone to compute and verify the class of a miner.\footnote{Tracking the mining power can be done by tracking how many blocks a miner mined (in proof-of-work), or its stake amount (in proof-of-stake), or amount of service it provides (in systems that rely on other resources), etc.}

A simple example is shown in Figure~\ref{miner-div-new} assuming two sidechains $A$ and $B$ ($A$ is more critical than $B$) each requires a committee of 6 miners. Here, we have $N=2n$ miners split into two classes where class 1 contains $n$ miners with the highest scores ($m_{i,1}$ and $m_{i, 2}$ refer to miners in class 1 and class 2, respectively). The more critical Chain $A$ picks 4 miners from class 1 and the rest from class 2, while Chain $B$ uses a balanced selection, picking 3 miners from each class.

We introduce a modified version of cryptographic sortition~\cite{Gilad17,Kokoris18} to run this election locally, shown in Algorithm~\ref{alg:cap}. Each miner computes its score $s$, and the scores of other miners in the system $S$, and uses these to determine its class $c$ (other miners' score can be cached locally and updated each epoch). A miner determines if it is elected by flipping a coin biased by the probability of its class being elected for any committee. This is done by computing a miner-specific random value using a verifiable random function (VRF), and comparing the normalized VRF output to the probability of being selected in any committee containing class $c$ miners. If a miner is elected, then it determines the exact committee based on a second normalized VRF output. That is, each sidechain is assigned a sub-range of [0,1) based on the number of class $c$ miners it needs, and a miner finds the sub-range where its normalized VRF output belongs. At the end, the miner publishes its score and election result (VRF outputs, proofs, and assignment) so anyone can verify.

\begin{algorithm}[t!]
\small{
\caption{Elect($seed_1, seed_2, s, S = [ s_i \text { for } i \in \{0, \cdots, N\}]$)}
\label{alg:cap}
\begin{algorithmic}

\State \Comment{Percentile of miner's score compared to all scores}
\State $\upsilon = 1-GetPercentile(s, S)$        

\State\Comment{The class of the miner from the percentile}
\State $c = \lceil \upsilon \cdot C \rceil$         

\State \Comment{Total number of class $c$ miners needed for all sidechains}
\State $n_{c, all} = \sum_{\module \in \modules} n_{c, \module}$ 

\State \Comment{Compute a miner-specific VRF output}
\State $rnd_1, \pi_1  \gets VRF_{sk}(seed_1||pk) $ 

\State \Comment{Normalize VRF output and flip a biased coin to check if the miner is elected for its class for any committee }
\State $elected = \frac{rnd_1}{2^{|rnd_1|}} < \frac{\binom{\mu - 1}{n_{c, all} - 1 } }{\binom {\mu}{n_{c, all}  }}$

\If{$elected$}
        \State \Comment{Compute another miner-specific VRF output}
	\State $rnd_2, \pi_2 \gets VRF_{sk}(seed_2||pk)$
        \State \Comment{Normalize VRF output and determine the sidechain based on the position in the range [0,1)}
        \If{$0\leq \frac{rnd_2}{2^{|rnd_2|}} < \frac{n_{c, \module_1}} {n_{c, all} } $}  
            \State {$ a \gets \module_1.\sidechain$ } 

        \ElsIf{$\frac{rnd_2}{2^{|rnd_2|}} < \frac{n_{c, \module_1} + n_{c, \module_2}}{n_{c, all} } $}
            \State {$ a \gets \module_2.\sidechain$ } 

        \State $\cdots$
        \Else 
        \State {$a\gets \module_{last}.\sidechain$ }  \EndIf                            
\EndIf
\State \Return $s, rnd_1, \pi_1, rnd_2, \pi_2, a$
\end{algorithmic}
}
\end{algorithm}

In Appendix~\ref{apdx:class-comp}, we show how to configure the number of miners from each class to achieve a target committee failure probability (chosen based on module importance).

\subsubsection{Hierarchical Intra-module Workload Sharing} 

In resource markets, it is expected that some modules receive heavier workload than others, e.g., the service-payment exchange module, making the functionality-oriented workload distribution imbalanced. To address this issue, we introduce a hierarchical intra-module workload sharing technique that shares a module's workload among several sub-sidechains to increase intra-module parallelism. 

At the end of an epoch, all sidechains with full meta-blocks, and whose mempool's cannot be emptied in one epoch, are marked by their committees as heavy. The committee estimates the number of sub-sidechains needed to process the full traffic, and reports that in the sync-transaction. This triggers increasing the number of committees to be elected for the next epoch to account for these sub-sidechains, if possible. That is, if the current miner population cannot accommodate additional large-enough committees, then fewer sub-sidechains (if any) will be created based on the available number of miners.\footnote{The miner population size can be determined by tracking miner activity on the mainchain, or by having an identity sidechain where miners register when joining (similar to the identity shard in~\cite{Kokoris18}).}

As an example, suppose we have a market with miners divided into 4 classes $(A,B,C,D)$, each of size 50 miners. The market has two heavy modules and a few non-heavy ones. Non-heavy modules need $(25A, 30B, 15C, 10D)$ miners for their committees in total, leaving $(25A, 35B, 35C, 40D)$ miners for the heavy modules. Assume the first module needs 5 sub-sidechains each with a committee of $(4A, 3B, 2C, 1D)$ miners, and the second module needs 3 sub-sidechains each with a committee of $(9A, 12B, 3C, 6D)$ miners. These cannot be satisfied based on the current miner population. Thus, we assign the miners to form committees as close to the values needed by each module, having the first one form 4 sub-sidechains using $(16A, 12B, 8C, 4D)$ miners. The second module gets 2 sub-sidechains using $(6A, 8B, 2C, 4D)$ miners.

\begin{figure}[t]
\begin{center}
\includegraphics[height= 0.9 in, width = 0.9\columnwidth]{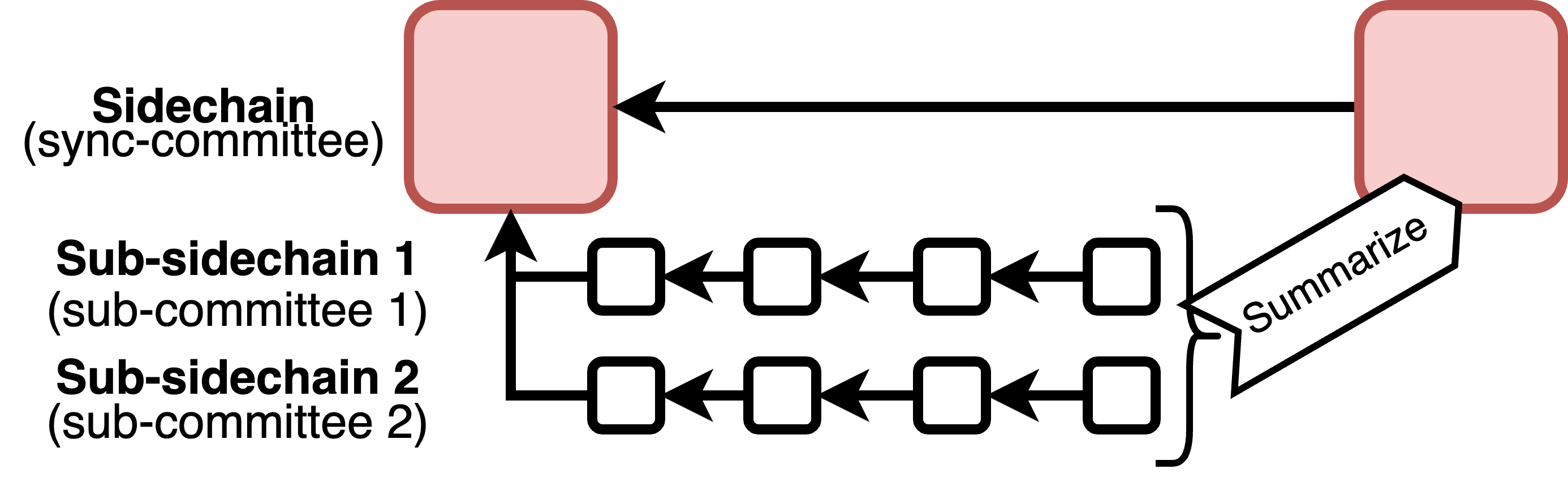}
\end{center}
\vspace{-10pt}
\caption{Scaling using sub-sidechains.}
\vspace{-4pt}
\label{sub-sidechains}
\end{figure}

Module transactions are randomly assigned to each sub-sidechain. As shown in Figure~\ref{sub-sidechains}, only meta-blocks are recorded on those parallel sub-sidechains, with only one summary-block. This is done by having a sync-committee elected from the members of the sub-committees (following the same distribution of these sub-committees) that produces a summary-block and issues a sync-transaction accordingly.

\subsubsection{Eliminating Cross-sidechain Transactions}

Each module's sidechain processes the module's relevant transactions. Information from other sidechains, if needed, e.g., checking that a client created an escrow for payments, can be easily obtained from the mainchain or other sidechains (all miners maintain the mainchain, and keep copies of other sidechains to speed up committee switching). This functionality-oriented workload sharing has a big advantage; \emph{eliminating the costly cross-sidechain transactions}.

\begin{nclaim}
    Under proper module identification, \systwo has zero cross-sidechain transactions.
\end{nclaim}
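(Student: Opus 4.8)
## Proof Proposal

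The plan is to turn the informal phrase ``under proper module identification'' into a precise structural invariant about the transaction-dependency graph, and then argue that the functionality-oriented split respects that invariant by construction. First I would define, for a given market with transaction types $T = \{\tx_1, \dots, \tx_k\}$, a \emph{dependency relation}: $\tx_i \rightarrow \tx_j$ if validating or processing a transaction of type $\tx_j$ requires reading mutable state produced by a transaction of type $\tx_i$ (e.g., $\tx_\agreement$ depends on $\tx_\ask$ and $\tx_\offer$; $\tx_\serviceProof$ depends on $\tx_\agreement$; $\tx_\dispute$ depends on $\tx_\serviceProof$). A \emph{module identification} is a partition of $T$ (together with a distinguished ``mainchain'' class for non-summarizable types) into modules $\modules = \{\module_1, \dots, \module_m\}$. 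I would then define ``proper'' module identification precisely: the partition is proper if, whenever $\tx_i \rightarrow \tx_j$ and both are summarizable, either $\tx_i$ and $\tx_j$ lie in the same module, or the only state of $\tx_i$ that $\tx_j$ reads is state that has already been \emph{summarized and synced to the mainchain} by the time $\tx_j$ is processed (this is exactly the case for escrow-creation on the mainchain, and for agreements which are synced at epoch boundaries).

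The key steps, in order: (1) formalize the dependency graph and the notion of a proper partition as above; (2) recall from the system description that each sidechain $i$ receives exactly the transactions whose module prefix names $\module_i$, and that every sidechain committee, plus every miner, maintains a full copy of $\led_\mainc$ and caches of the other sidechains; (3) define a cross-sidechain transaction to be one whose processing on its home sidechain $\led_{\sidec,i}$ requires a \emph{write} to, or a read of \emph{uncommitted} state on, some other sidechain $\led_{\sidec,j}$ with $j \ne i$; (4) show that for any $\tx_j$ assigned to $\module_i$, all of its dependencies $\tx_i \rightarrow \tx_j$ are discharged either internally (same module, hence same sidechain and same meta-block sequence) or by reading committed state from $\led_\mainc$ (by properness), and its effects are summarized only within $\module_i$'s summary rules (Figure~\ref{fig:summary-rules}), hence it writes only to $\led_{\sidec,i}$ and, at epoch end, to the mainchain via $\module_i$'s own sync-transaction; (5) conclude that no transaction ever touches a second sidechain, so the count of cross-sidechain transactions is zero; (6) as a corollary note the intra-module case (sub-sidechains) is unaffected since module transactions are assigned to sub-sidechains \emph{randomly} and summary rules here are commutative/additive (counts and payments aggregate, agreements and dispute outcomes are keyed by $\cid$), so no ordering dependency crosses sub-sidechain boundaries either, and a single sync-committee collates them.

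The main obstacle I expect is pinning down ``proper'' so that the claim is neither vacuous nor false: if properness is defined too weakly the conclusion fails (e.g., a split that puts $\tx_\ask$ and $\tx_\offer$ in different modules genuinely needs cross-sidechain coordination to form an agreement, as the paper itself notes), and if defined too strongly it becomes ``assume there are no cross-sidechain dependencies,'' trivializing the statement. The honest move is to define properness in terms of the \emph{mainchain-mediated} discharge of inter-module dependencies — i.e., any cross-module edge in the dependency graph must be one where the prerequisite state is already on the mainchain when the dependent transaction is processed — and then the real content of the proof is step (4): verifying that for the abstract three-module market of Section~\ref{sec:prelim}, every cross-module edge ($\tx_\dispute$ reading service-payment outcomes, $\tx_\serviceProof$ reading agreements, everything reading escrows) is in fact mainchain-mediated because summaries are synced each epoch and escrows live on the mainchain. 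A secondary subtlety is the within-epoch timing: a $\tx_\dispute$ in epoch $e$ referencing a $\tx_\serviceProof$ from the \emph{same} epoch $e$ (not yet synced) — I would handle this either by arguing such references are only to proofs that are themselves summarized in the same dispute sidechain's view via the mainchain state snapshot at epoch start, or by noting the recovery/ordering discipline already forces disputes to reference finalized (prior-epoch, synced) state; making this precise is where most of the care goes.
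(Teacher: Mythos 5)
Your proposal is correct and its skeleton matches the paper's proof: both arguments rest on (i) a whole module's traffic being contained in one sidechain, (ii) every miner holding the mainchain and copies of the other sidechains, and (iii) transactions being processed against finalized rather than pending state, so that no transaction ever needs to read from or write to a second sidechain via any cross-chain protocol. Where you diverge is precisely on the subtlety you flag at the end: you define ``proper'' so that every cross-module dependency must be \emph{mainchain-mediated} (the prerequisite state already summarized and synced), and you then have to sweat the case of a $\tx_\dispute$ in epoch $e$ referencing a $\tx_\serviceProof$ from the same epoch $e$. The paper dissolves that case by definition rather than by timing discipline: it defines a cross-sidechain transaction as one whose inputs \emph{cannot be retrieved by the home committee on its own} (or whose outputs must be published on multiple sidechains), and since every miner locally holds the other sidechains' meta- and summary-blocks and meta-block inclusion is already finality, reading a same-epoch, not-yet-synced record from another module is an ordinary local read, not a cross-sidechain transaction. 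Your route buys a cleaner, checkable formalization of ``proper'' as a condition on the transaction dependency graph, but at the cost of either extra ordering constraints (disputes may only reference prior-epoch synced state) or a narrower theorem than the paper actually claims; the paper's route gets the full claim with no timing caveat but leaves ``proper module identification'' informal. Your corollary for sub-sidechains (that the summary rules are additive and keyed by contract ID, so random intra-module assignment introduces no cross-sub-sidechain dependency) is a genuine addition that the paper's proof does not spell out.
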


\begin{proof}
   A cross-sidechain transaction is a transaction that: (1) its processing requires inputs or information from other sidechains which cannot be retrieved by the home sidechain committee on its own, and/or (2) its output impacts multiple sidechains so the state changes induced by this transaction must be published on these sidechains instead of just its home sidechain.
    
   When a whole module's workload is contained within one sidechain, all relevant transactions inputs are within the same sidechain and/or available in the published state on the mainchain (or in other sidechains' meta- and summary-blocks if the sync-transaction is not processed yet). Recall that all miners maintain the mainchain, and all of them have copies of other sidechains. Furthermore, any transaction is processed based off the finalized state (not based off pending transactions in the mempool). Thus, a sidechain committee has access to any information needed to process any transaction belonging to the module they manage.   

   By the same argument, any transaction in a sidechain's module will update the state of that sidechain since the whole module's traffic is within one sidechain. If this information is needed by other sidechains, e.g., a dispute result is needed to decide whether a server is still active to be matched with clients, miners can access this information on the mainchain or the summary-blocks of other sidechains.
\end{proof}

\subsubsection{Resilience to Interruptions} 

\systwo's sidechains have a mutual-dependency relation with the mainchain as they share processing the market workload. \systwo also creates a module criticality hierarchy, e.g., the dispute module issuing penalties against misbehaving miners can impact market matching and service-payment exchange modules. Also, there is the mainchain rollback issue, where abandoning recent blocks on its blockchain may lead to losing sync-transactions. Thus, to ensure system operation continuity and security, \systwo needs an \emph{autorecovery protocol} to handle three types of interruptions: rollbacks, intra-module sidechain interruptions, and inter-module issues.

\textbf{Handling rollbacks and module recovery.}  \systwo uses the same mechanisms as \sysone. Mass-syncing allows a committee to issue a sync-transaction for its epoch and all prior epochs impacted by a rollback. While view-change addresses misbehaving committee leaders, and backup committees handle misbehaving primary committees, thus ensuring intra-module recovery. (For more details, see~\cite{chainboost-paper}.)

\begin{figure}[t!]
    \centering
    \includegraphics[height= 1.0 in, width = 0.8\columnwidth]{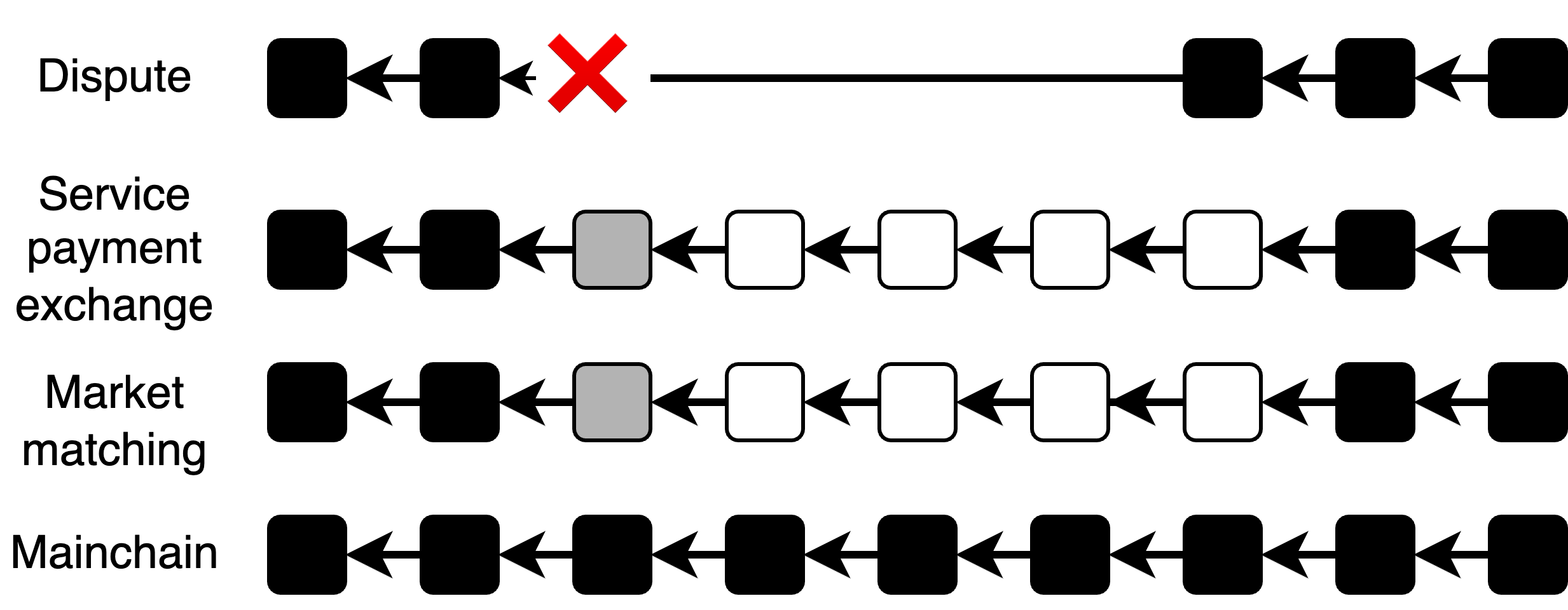}
    \vspace{-4pt}
    \caption{Autorecovery upon failure of the dispute sidechain (gray block signals interruption detection, and white blocks indicate empty blocks).}
    \vspace{-4pt}
    \label{fig:recov-epochs}
\end{figure}

\textbf{Inter-module issues.} We introduce a simple mechanism that allows a critical module to recover without impacting the liveness of other modules' sidechains. Say there is $\module_1$ whose behavior depends on information processed in a critical module $\module_2$. The miners in $\module_1$'s sidechain committee detect that $\module_2$ is interrupted if they do not see new blocks on $\module_2$'s sidechain after a timeout $\eta$. At that moment, $\module_1$'s committee mines empty meta-blocks until $\module_2$ recovers.  Figure~\ref{fig:recov-epochs} depicts this protocol; note the dispute resolution module is important for all other modules, e.g., the service-payment exchange module is not supposed to process payments for a server that violates the service agreement. If the dispute module experiences an interruption, the two other modules stop transaction processing, and mine empty blocks until the dispute module recovers. The mainchain continues to operate normally but will have empty summaries from these sidechains during the interruption. 

\subsection{Security}
\label{sec-analysis}
In Appendix~\ref{apdx:sec-analysis}, we analyze the security of \systwo covering the failure probability of the autorecovery protocol, and showing that \systwo design preserves liveness and safety of the underlying resource market.

\section{Implementation}
\label{sec:implementation}

To evaluate the performance gains that \systwo can achieve, we implement a proof-of-concept for a distributed file storage market, inspired by Filecoin~\cite{filecoin}, as a use case (our anonymized code repository can be found at~\cite{chainscale-repo}).

\textbf{Sidechain implementation.} We extend the source code from~\cite{chainboost-paper} to have sidechains for the following modules:

\underline{\emph{Market matching}}: Here, a client asks for file storage service for a specific duration, file size, and price, via a contract-proposal transaction, and the server responds with a contract-response transaction that either approves or declines a client's ask. If a server accepts a proposal, it also issues a contract-deal transaction, which is the client's contract-proposal (originally signed by the client) signed also by the server; it sets the service terms and the client escrow used to pay for service. Otherwise, the two parties keep negotiating. In our implementation, the service contracts are assigned negotiation durations from a normal distribution. Whenever a contract expires, it enters into a negotiation phase that ends with the issuance of a contract-deal and reactivation of the contract.

\underline{\emph{Service-payment exchange}}: Servers prove storing files by submitting compact proofs-of-retrievability (PoR)~\cite{shacham2008compact} transactions. The total payments for valid submitted PoRs is dispensed to the server once the storage contract ends. 

\underline{\emph{Dispute resolution}}: When a miner observes a misbehavior (i.e., a server submits an invalid PoR transaction), it issues a dispute-transaction containing the proof of misbehavior. The miners in this module verify the proof and, if correct, publish the outcome on the module's sidechain. In our implementation, the misbehavior rate is a configurable parameter.

Each module has a single sidechain, except the service-payment exchange module that can have up to three sub-sidechains. All sidechains use same block size, round duration, and epoch length. For sidechain consensus, we use BLS Collective Signing (CoSi)-based PBFT inspired by~\cite{Kogias16} as implemented in~\cite{chainboost-repo} using the default committee election option (random cryptographic sortition~\cite{Gilad17}). We evaluate the impact of our weighted sortition in a separate experiment.  

\textbf{Traffic generation.} We use the same traffic generation patterns and sizes from~\cite{chainboost-paper}, and augment them with dispute transactions (515 bytes) and contract-deals (716 bytes). Mainly, we apply the traffic distribution observed in Filecoin using~\cite{site:filfox}; 98\% of the traffic is service-related transactions (described under the modules above) and 2\% are currency transfers (destined to the mainchain). All the transactions are created at the beginning of each mainchain round.

\textbf{Comparison to sharding.} We implement a proof-of-concept sharded resource market (that employs random workload distribution) inspired by Omniledger~\cite{Kokoris18} and Rapidchain~\cite{zamani2018}. Miners are assigned to shards randomly at the beginning of each epoch. Cross-shard transactions (those that have an input from another shard) are handled using inter-committee transaction forwarding as in Rapidchain~\cite{zamani2018}. That is, when a transaction has an input in another shard, a cross-shard transaction is forwarded to the destination shard who puts this transaction in its mempool to be processed.

\textbf{Evaluating autorecovery under different committee election mechanisms.} We write a C++ program to evaluate \systwo's autorecovery under random and weighted committee elections. The program configures the percentages of lazy and malicious miners in a large miner population. Then, it creates one primary and several backup committees using both election approaches and tracks the number of committee failures until the system recovers or fails (where a committee fails if the if the assigned number of adversarial nodes violates liveness). This allows quantifying the time needed to recover.

\section{Performance Evaluation}

\subsection{Experiment Setup}
We deploy our experiments on a computing cluster composed of 10 12-core, 130 GiB RAM, host-pinned VMs, connected with a 1 Gbps network link. Unless stated otherwise, our experiments run over a network of 8000 nodes, each serving 8 contracts. An experiment lasts for 61 mainchain rounds, with an epoch length $\omega = 10$ mainchain rounds corresponding to 30 sidechain rounds, with a block size of 1 MB. A sidechain committee contains 500 miners. For the traffic pattern described before, 10\% of the active contracts generate disputes. Lastly, we collect experiment data using SQLite~\cite{website:sqlite}.

We test the impact of several parameters, including sidechain block size, ratio of sidechain rounds to mainchain rounds in an epoch, and the number of sub-sidechains used by a heavy module. We also compare \systwo to sharding. Finally, we study the impact of our weighted committee selection approach in reducing autorecovery overhead. In reporting our results, we measure the following metrics:

\underline{\emph{Throughput}}: The average number of transactions processed per mainchain round, including transactions appearing in a mainchain block (or round) and those appearing in all sidechain blocks during the same mainchain round. 

\underline{\emph{Confirmation time:}} The time (in seconds) a transaction takes to be confirmed from the moment it enters the queue. We measure the time in mainchain blocks and convert to Filecoin's block duration: 30 seconds. As we use PBFT, a transaction is confirmed once it is published in a meta-block.

\underline{\emph{Storage footprint:}} The persistent blockchain size (mainchain blocks and sidechain summary-blocks) in megabytes. 

\underline{\emph{Time to autorecover:}} The (worst case) average time needed to autorecover from interruption in a sidechain.  

We note that after an experiment run ends, we continue traffic processing to empty the queues so we can measure confirmation times accurately. All other metrics are reported at the end of the run. To differentiate between different configurations of \systwo, we notate them as $a$P$b$M$c$D-\systwo, where $a$ is the number of parallel sidechains in the service-payment exchange (P) module, $b$ is the number of parallel sidechains in the market matching (M) module, and $c$ is the number of parallel sidechains in the dispute (D) module. 

\subsection{Results}

\textbf{Scalability.}
We test \systwo's scalability by varying the number of service contracts within $\{32K, 64K, 128K, 512K\}$. During each epoch, a contract would result in a PoR transaction per mainchain round when active, a service payment when it expires, and a random number of deal negotiation transactions before its re-activation. This experiment compares \sysone, 1P1M1D-\systwo, 2P1M1D-\systwo, and 3P1M1D-\systwo, and we measure throughput, confirmation time, and system storage footprint. 

\begin{figure}[t]
\centering
\begin{subfigure}{.23\textwidth}
    \centering
    \includegraphics[width=\linewidth]{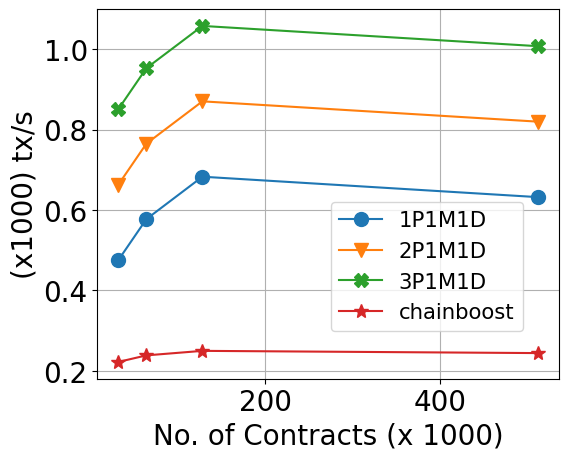}  
    \caption{Throughput}
    \label{subfig:scale-Throughput}
\end{subfigure}
\begin{subfigure}{.23\textwidth}
    \centering
    \includegraphics[width=\linewidth]{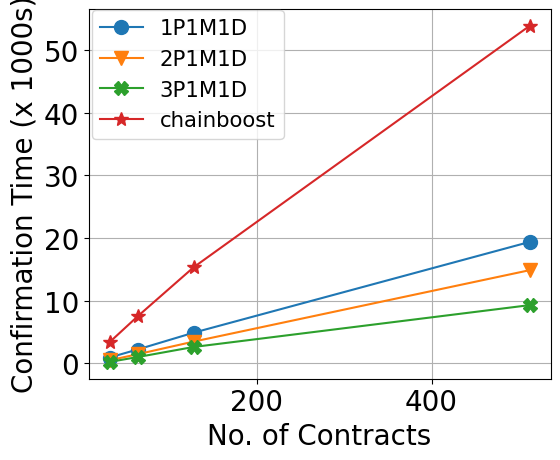}  
    \caption{Sidechain confirmation time}
    \label{subfig:scale-Latency}
\end{subfigure}
\begin{subfigure}{.23\textwidth}
    \centering
    \includegraphics[width=\linewidth]{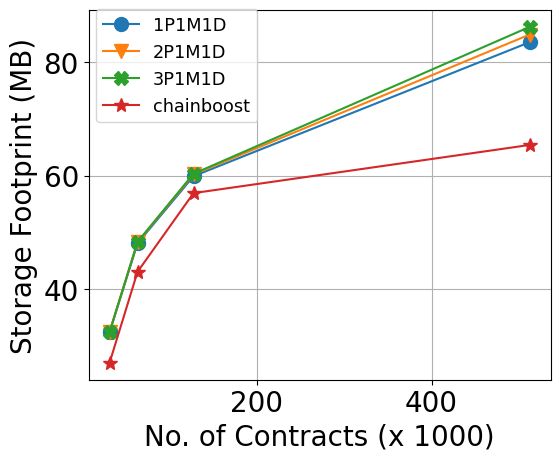}  
    \caption{Storage footprint}
    \label{subfig:scale-FootPrint}
\end{subfigure}
\vspace{-6pt}
\caption{Scalability results under various workloads.}
\vspace{-4pt}
\label{fig:scale}
\end{figure}

Throughput-wise, as shown in Figure~\ref{subfig:scale-Throughput}, \systwo significantly outperforms \sysone. At 32K contracts, 1P1M1D-\systwo and 3P1M1D-\systwo have 2x and 3.8x, respectively, the throughput of \sysone. We also notice that the throughput peaks at 128K contracts, realizing a 4x increase. At 512K contracts, we notice a slight drop in throughput for all versions of \systwo; this is due to the congestion in the system that exceeds the capacity of the sidechains.        

For confirmation time, \systwo also outperforms \sysone (Figure~\ref{subfig:scale-Latency}). At 32K contracts, 1P1M1D-\systwo has a confirmation time of 944 sec while it is 3,380 sec for \sysone, achieving a 3.5x reduction in latency. The gap gets amplified as system workload increases; at 512K contracts, the average confirmation for \sysone is 53,860 sec, while it is reduced to 19,393 sec for the 1P1M1D-\systwo.

In terms of storage footprint, \systwo has a similar trend as \sysone until up to 128K contracts. For 512K contracts, we notice that \systwo needs around 20 MB of additional storage. This is because under congestion, \sysone (which has one sidechain) will process less workload (service contracts and disputes), leading to small smaller summary-blocks. \systwo, with its multi-sidechain design, can deal with larger workloads, making its summary-blocks larger.

\textbf{Impact of block size.}
We examine the impact of the mainchain and sidechain block sizes on performance; we set the mainchain block size to 1 MB, and sidechain block size to $\{0.5, 1, 1.5, 2\}$ MB. In this experiment, we compare \sysone, 1P1M1D-\systwo, 2P1M1D-\systwo in terms of throughput and confirmation time. 

\begin{figure}[t]
\centering
\begin{subfigure}{.23\textwidth}
    \centering
    \includegraphics[width=.95\linewidth]{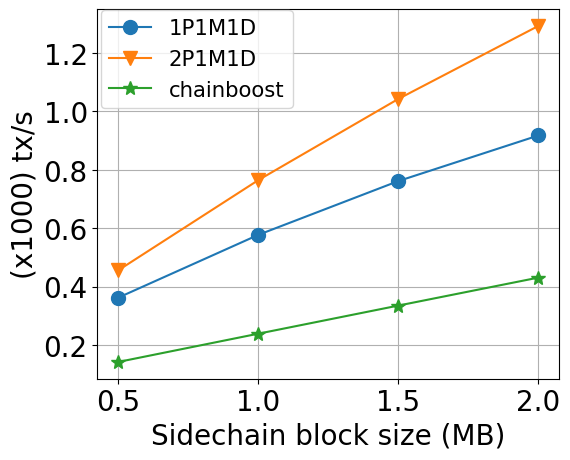}  
    \caption{Throughput}
    \label{subfig:blocksize-Throughput}
\end{subfigure}
\begin{subfigure}{.23\textwidth}
    \centering
    \includegraphics[width=.95\linewidth]{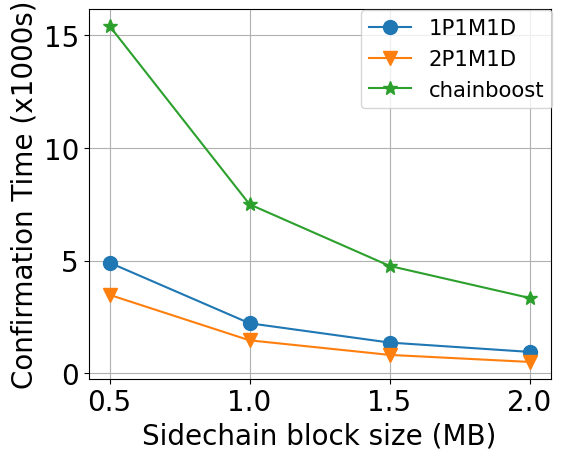}  
    \caption{Sidechain confirmation time}
    \label{subfig:blocksize-Latency}
\end{subfigure}
\vspace{-6pt}
\caption{Impact of block size.}
\label{fig:blocksize}
\end{figure}

In Figure~\ref{subfig:blocksize-Throughput} we observe that \systwo outperforms \sysone in terms of throughput, even at smaller block sizes, where the difference is 2.5x under 1P1M1D-\systwo, and 3.22x under 2P1M1D-\systwo. As expected, \systwo's throughput improves at larger block sizes, achieving around 1293 tx/s under 2P1M1D an a block size of 2MB. For confirmation time, Figure~\ref{subfig:blocksize-Latency} shows that, at the smallest block size, 1P1M1D-\systwo and 2P1M1D-\systwo achieve 3.13x and 4.42x improvement, respectively, over \sysone (and these improvements reach 3.5x and 6.5 at the largest block size). This is again a result of the congestion that \sysone experiences as it has one sidechain.

Overall, \systwo with its multi-sidechain design and the hierarchical intra-module workload sharing for heavy modules, coupled with the right sidechain block size, achieves a high parallelism level. This makes \systwo a more viable solution for resource markets in practice than \sysone.

\textbf{Number of sidechain rounds per epoch.}
We change the number of sidechain rounds in an epoch to be $\{40, 60, 80, 100\}$, with an epoch length is still 10 mainchain rounds, we have the sidechain round duration be $\{ 3, 3.75, 5, 7.5, 10\}$ sec, respectively. We report the maximum throughput that can be achieved and the confirmation time.

\begin{figure}[t]
\centering
\begin{subfigure}{.23\textwidth}
    \centering
    \includegraphics[width=.95\linewidth]{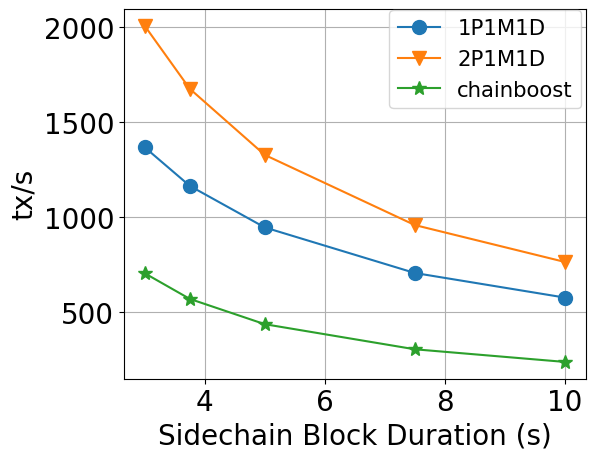}  
    \caption{Throughput}
    \label{subfig:blocktime-Throughput}
\end{subfigure}
\begin{subfigure}{.23\textwidth}
    \centering
    \includegraphics[width=.95\linewidth]{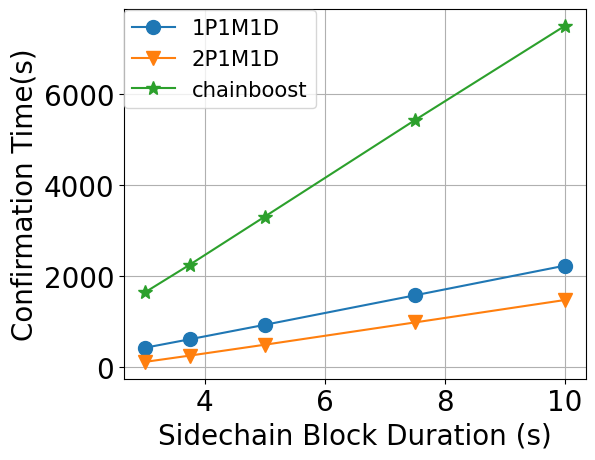}  
    \caption{Sidechain confirmation time}
    \label{subfig:blocktime-Latency}
\end{subfigure}
\vspace{-6pt}
\caption{Impact of sidechain round duration.}
\vspace{-4pt}
\label{fig:blocktime}
\end{figure}

Figure~\ref{subfig:blocksize-Throughput} shows that \systwo can achieve a throughput of 2000 tx/s when the sidechain round is 3 sec (i.e., 10 sidechain rounds per mainchain round); a 2.66x improvement over \sysone. Throughput drops as the sidechain block duration increases because blocks now would contain the same number of transactions (block size is fixed) but are published at a slow rate. A similar trend is observed for confirmation time as shown in Figure~\ref{subfig:blocktime-Throughput}; for a 3 sec sidechain round, \sysone has 3.93x and 5.11x the confirmation time achieved by 1P1M1D and 2P1M1D, respectively. Confirmation time increases for longer round durations. As such, the use of short rounds will promote the performance gains achieved by \systwo. However, system designers must account for the fact that a round duration must cover the time needed for consensus on the sidechain when configuring the system parameters. 

\textbf{Comparison to sharding.} We compare \systwo to a sharding-based resource market (denoted as ShardedMarket). The sharded version has the same number of shards as the number of chains (both main and side chains) in \systwo's version of the market. We report throughput, confirmation time, storage footprint, and ratio of cross-chain transactions (CTR) for a 1P1M1D-\systwo and ShardedMarket with four shards, under a workload of 128K contracts. 

\begin{table}[t!]
    \centering
    \caption{Comparing ShardedMarket vs. 1P1M1D-\systwo.}
    \vspace{-6pt}
    \label{tab:comp-shard}
    \small
  \resizebox{\columnwidth}{!}{  
    \begin{tabular}{|l|c|c|c|c|}
        \hline
       \multirow{2}{*}{\bf{Solution}}  &  Throughput & Confirmation & CTR & Storage \\
                 &  (tx/s) & Time (s) & \% & (MB) \\
       \hline
        \bf{ShardedMarket} & 269.48 & 13419.92  & 21.08 & 244\\
       \hline
        \bf1P1M1D-\systwo & 672.38 & 3868.75 & 0 & 62 \\
       \hline
    \end{tabular}
}
\end{table}

As shown in Table~\ref{tab:comp-shard}, \systwo outperforms sharding in terms of throughput and confirmation time. It achieves 2.5x the throughput of ShardedMarket. This is due to the sidechain design allowing for publishing blocks at a faster rate than the mainchain, thus processing traffic at a faster rate than sharding. \systwo achieves 3.5x improvement in latency. This is due to the fast traffic processing and the elimination of cross-chain transactions, which constitute 21.08\% of all transactions in ShardedMarket. Storage-wise, since ShardedMarket does not prune blocks, the system's total storage grows to 244 MB, which is 3x the cost incurred by \systwo.

\textbf{Autorecovery overhead.} Here, we configure a network of $10^6$ nodes, of which $(p_l, p_A) \in \{0.25, 0.3, 0.33\}^2$ are lazy and malicious miners, respectively. We organize the network into committees of size $S_c = 10^4$. We set the time needed for a backup committee to detect misbehavior and step in instead of the previous committee to be 5 min (this is the worst case scenario, i.e., when this process takes an epoch which spans 5 min under our experimental setting). We compare the default (random) and weighted committee elections while having two miner classes C1 and C2. For C1, we vary its configuration to have 15\%, 30\%, and 45\% of the total adversarial nodes (and hence, C2 would contain the rest of the adversarial nodes, namely, 85\%, 70\%, and 55\%, respectively). For the weighted election, we vary the committee configuration to have 40\%, 50\%, and 60\% of its members from C1 and the rest from C2. We use the notation WXX-AYY, where XX is the percentage chosen from C1 and YY is the adversarial node percentage in C1. We conduct 10000 runs and report the average time needed to recover (if possible).  

\begin{figure}[t!]
\captionsetup[subfigure]{font=small,labelfont=small}
\centering
\begin{subfigure}{.15\textwidth}
    \centering
    \includegraphics[width=.95\linewidth]{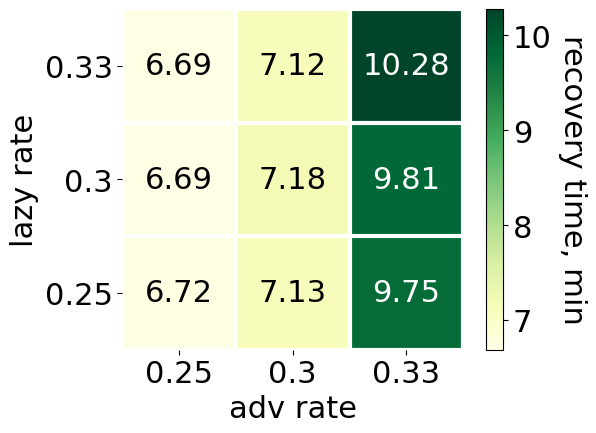}  
    \caption{Random}
    \label{subfig:auto-recov-random}
\end{subfigure}
\\
\begin{subfigure}{.15\textwidth}
    \centering
    \includegraphics[width=.95\linewidth]{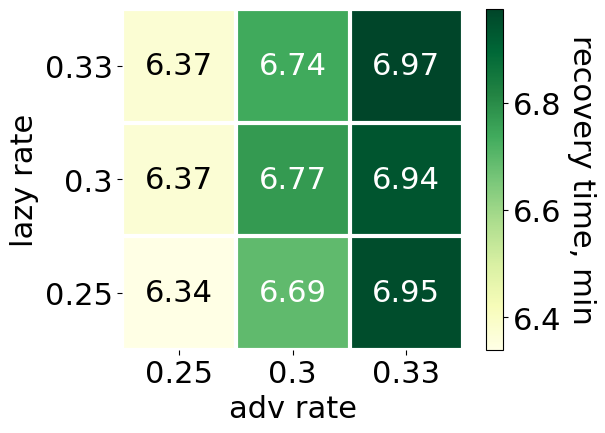}  
    \caption{W60\%-A15\%}
    \label{subfig:auto-recov-w-15}
\end{subfigure}
\begin{subfigure}{.15\textwidth}
    \centering
    \includegraphics[width=.95\linewidth]{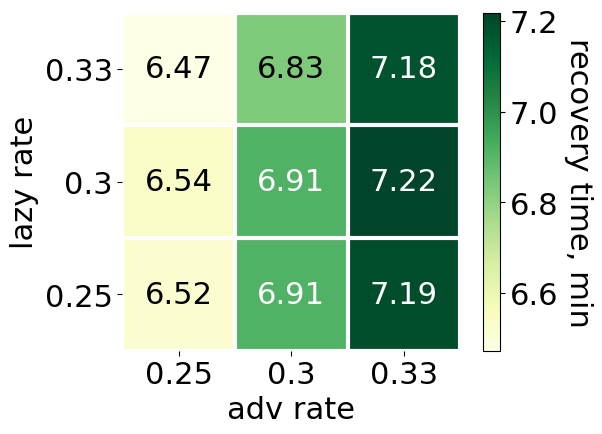}  
    \caption{W60\%-A30\%}
    \label{subfig:auto-recov-w-30}
\end{subfigure}
\begin{subfigure}{.15\textwidth}
    \centering
    \includegraphics[width=.95\linewidth]{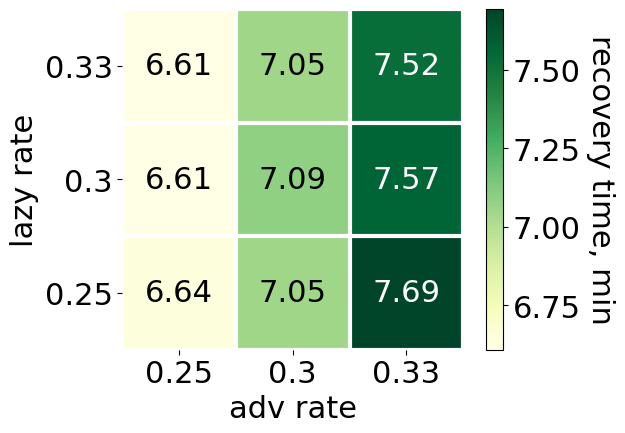}  
    \caption{W60\%-A45\%) }
    \label{subfig:auto-recov-w-45}
\end{subfigure}

\begin{subfigure}{.15\textwidth}
    \centering
    \includegraphics[width=.95\linewidth]{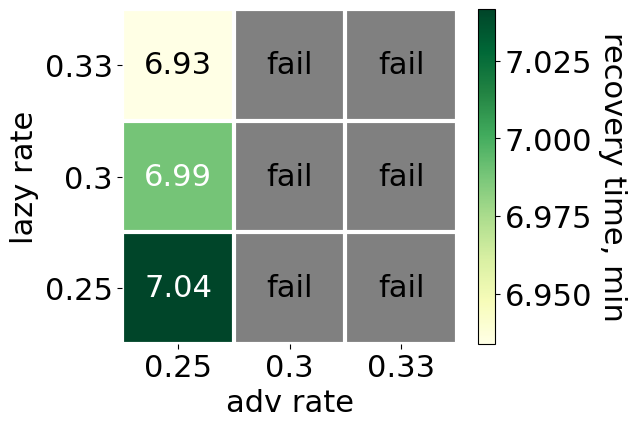}  
    \caption{W40\%-A15\% }
    \label{subfig:auto-recov-40-w-15}
\end{subfigure}
\begin{subfigure}{.15\textwidth}
    \centering
    \includegraphics[width=.95\linewidth]{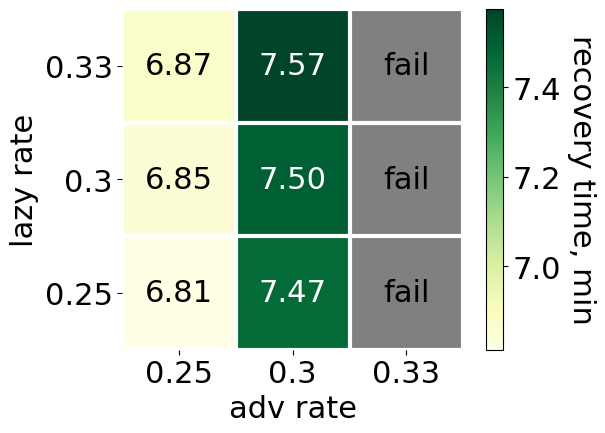}  
    \caption{W40\%-A30\%}
    \label{subfig:auto-recov-40-w-30}
\end{subfigure}
\begin{subfigure}{.15\textwidth}
    \centering
    \includegraphics[width=.95\linewidth]{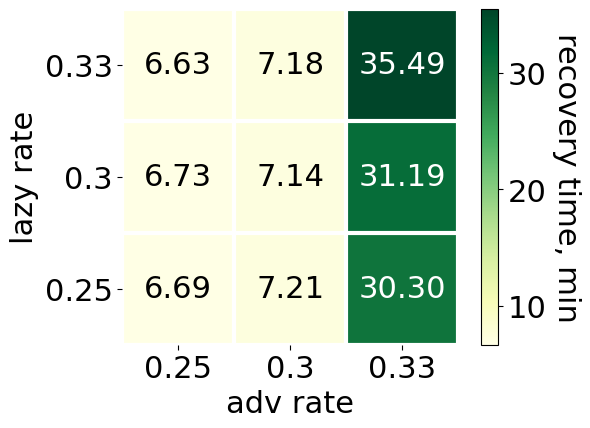}  
    \caption{W40\%-A45\% }
    \label{subfig:auto-recov-40-w-45}
\end{subfigure}

\begin{subfigure}{.15\textwidth}
    \centering
    \includegraphics[width=.95\linewidth]{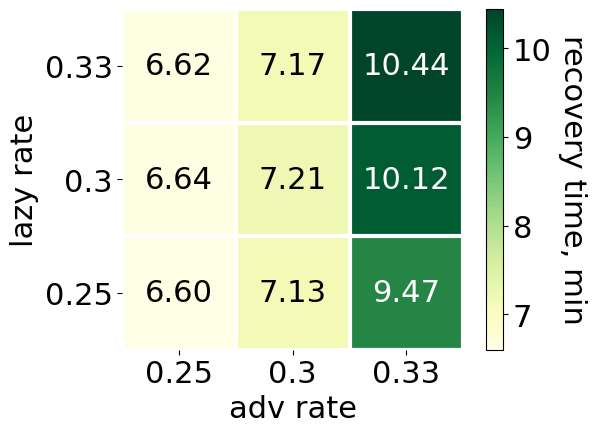}  
    \caption{W50\%-A15\%} 
    \label{subfig:auto-recov-50-w-15}
\end{subfigure}
\begin{subfigure}{.15\textwidth}
    \centering
    \includegraphics[width=.95\linewidth]{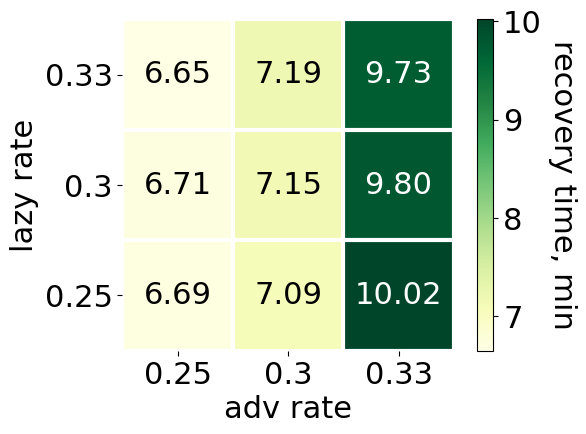}  
    \caption{W50\%-A30\%}
    \label{subfig:auto-recov-50-w-30}
\end{subfigure}
\begin{subfigure}{.15\textwidth}
    \centering
    \includegraphics[width=.95\linewidth]{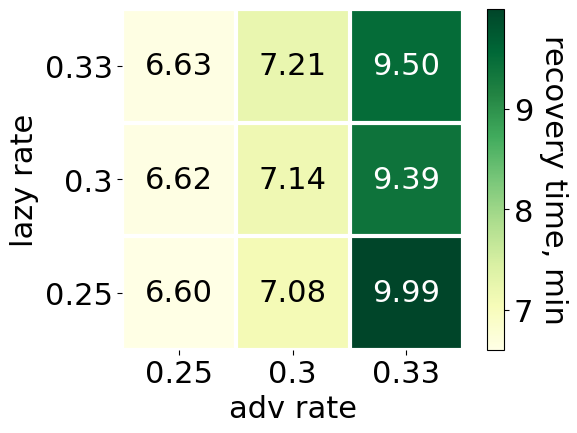}  
    \caption{W50\%-A45\%}
    \label{subfig:auto-recov-50-w-45}
\end{subfigure}
\vspace{-6pt}
\caption{Average autorecovery time for random and 2-class weighted sortition under different adversarial distributions.}
\vspace{-4pt}
\label{fig:auto-recovery}
\end{figure}

Our results can be found in Figure~\ref{fig:auto-recovery}. As expected, under both the random and weighted sortition, time to recover (which includes having backup committees step in until a safe committee is found---if any) depends on the percentage of misbehaving nodes in the miner population. Larger percentage means larger chances of having misbehaving committee that violates liveness (i.e., total number of lazy and malicious nodes exceeds $2f+1$ in the committee). However, weighted sortition (where the larger committee share is selected from a class with fewer adversarial nodes) performs better than random sortition, reducing the average autorecovery time by up to 3.31 minutes. This confirms the positive impact of our weighted approach in promoting security and reducing downtime of critical modules. Our experiments also show that using a weighted sortition where the majority of the committee miners are picked from classes with higher adversarial miners can slow down the autorecovery by up to 35 minutes, and even leads to failures. Thus, using a low-priority weighted sortition for less critical modules harms performance and may compromise security. A more balanced approach (that is closer to random sortition) for less critical modules is more favorable.

\section{Conclusion}
\label{conclusions}

We introduced \systwo, a secure scalability solution for decentralized resource markets. \systwo encompasses a new hybrid sidechain-sharding architecture allowing for high parallelism of workload processing, arbitrary data exchange between the side and the main chains, and blockchain pruning. Our system splits the resource market into specialized functional modules, each of which is assigned its own sidechain. This is done without introducing any cross-chain transactions, which hampered earlier sharding solutions. We analyzed the security of our system and conducted thorough performance evaluations, showing the potential of \systwo in significantly scaling blockchain-based resource markets in practice.

\section*{Acknowledgements}
This work is supported by the National Science Foundation (NSF) under grant No. CNS-2226932.

\bibliographystyle{plain}
\bibliography{chainScale.bib}

\begin{thebibliography}{10}

\bibitem{op-verifier}
Can i run a verifier - optimism docs.
\newblock
  \url{https://help.optimism.io/hc/en-us/articles/4413155125403-Can-I-run-a-verifier}.

\bibitem{chainboost-repo}
chainboost source code.
\newblock \url{https://github.com/CSSL-UConn/chainboost-release}.

\bibitem{chainscale-repo}
chainsscale anonymized source code.
\newblock \url{https://github.com/chainscale-anon/chainscale}.

\bibitem{zksyncDataAvailability}
Data availability in zksync.
\newblock
  \url{https://docs.zksync.io/zksync-protocol/rollup/data-availability}.

\bibitem{eip-4844}
Eip-4844: Shard blob transactions.
\newblock \url{https://eips.ethereum.org/EIPS/eip-4844}.

\bibitem{filecoin}
Filecoin.
\newblock \url{https://filecoin.io/}.

\bibitem{zksync-era-finality}
Finality - zksync era docs.
\newblock \url{https://era.zksync.io/docs/reference/concepts/finality.html}.

\bibitem{site:filfox}
Gas statistics - filfox.
\newblock \url{https://filfox.info/en/stats/gas}.

\bibitem{golem}
Golem.
\newblock \url{https://golem.network/}.

\bibitem{tomshardwareGoogleDrive}
{G}oogle {D}rive users are reporting the loss of months of data ---
  tomshardware.com.
\newblock
  \url{https://www.tomshardware.com/software/cloud-storage/google-drive-users-are-reporting-the-loss-of-months-of-data}.

\bibitem{livepeer}
Livepeer.
\newblock \url{https://livepeer.com/}.

\bibitem{Optimism}
Optimism.
\newblock \url{https://optimism.io/}.

\bibitem{mediumOptimisticRollup}
Optimistic rollup is not secure enough than you think.
\newblock
  \url{https://medium.com/onther-tech/optimistic-rollup-is-not-secure-enough-than-you-think-cb23e6e6f11c}.

\bibitem{polygonArchitecturePolygon}
Polygon architecture.
\newblock
  \url{https://docs.polygon.technology/zkEVM/architecture/#implementation-model}.

\bibitem{session}
Session.
\newblock \url{https://getsession.org/}.

\bibitem{website:sqlite}
Sqlite.
\newblock \url{https://www.sqlite.org/index.html}.

\bibitem{storj}
Storj.
\newblock \url{https://www.storj.io/}.

\bibitem{mediumCheaterChecking}
{T}he {C}heater {C}hecking {P}roblem: {W}hy the {V}erifier’s {D}ilemma is
  {H}arder {T}han {Y}ou {T}hink --- medium.com.
\newblock
  \url{https://medium.com/offchainlabs/the-cheater-checking-problem-why-the-verifiers-dilemma-is-harder-than-you-think-9c7156505ca1}.

\bibitem{celer2024zk}
{T}he {P}antheon of {Z}ero {K}nowledge {P}roof {D}evelopment {F}rameworks.
\newblock
  \url{https://blog.celer.network/2023/08/04/the-pantheon-of-zero-knowledge-proof-development-frameworks/}.

\bibitem{arbitrumStateArbitrums}
{T}he state of {A}rbitrum's progressive decentralization | {A}rbitrum {D}{A}{O}
  - {G}overnance docs --- docs.arbitrum.foundation.
\newblock
  \url{https://docs.arbitrum.foundation/state-of-progressive-decentralization}.

\bibitem{optimismWithdrawalFlow}
Withdrawal flow --- docs.optimism.io.
\newblock \url{https://docs.optimism.io/stack/transactions/withdrawal-flow}.

\bibitem{zknationZIP4Reduce}
[zip-4] reduce the execution delay from 21 hours to 3 hours.
\newblock
  \url{https://forum.zknation.io/t/zip-4-reduce-the-execution-delay-from-21-hours-to-3-hours/373/3}.

\bibitem{zksync}
zksync.
\newblock \url{https://zksync.io/}.

\bibitem{abraham2020sync}
Ittai Abraham, Dahlia Malkhi, Kartik Nayak, Ling Ren, and Maofan Yin.
\newblock Sync hotstuff: Simple and practical synchronous state machine
  replication.
\newblock In {\em IEEE Symposium on Security and Privacy (SP)}, 2020.

\bibitem{albassam2018chainspace}
Mustafa Al-Bassam, Alberto Sonnino, Shehar Bano, Dave Hrycyszyn, and George
  Danezis.
\newblock Chainspace: A sharded smart contract platform.
\newblock In {\em Network and Distributed System Security Symposium (NDSS)},
  2018.

\bibitem{abc}
Ghada Almashaqbeh, Allison Bishop, and Justin Cappos.
\newblock Abc: A cryptocurrency-focused threat modeling framework.
\newblock In {\em IEEE Conference on Computer Communications Workshops (INFOCOM
  WKSHPS)}, 2019.

\bibitem{Anjum17}
Nasreen Anjum, Dmytro Karamshuk, Mohammad Shikh-Bahaei, and Nishanth Sastry.
\newblock Survey on peer-assisted content delivery networks.
\newblock {\em Computer Networks}, 116, 2017.

\bibitem{avarikioti2023divide}
Zeta Avarikioti, Antoine Desjardins, Lefteris Kokoris-Kogias, and Roger
  Wattenhofer.
\newblock Divide \& scale: Formalization and roadmap to robust sharding.
\newblock In {\em International Colloquium on Structural Information and
  Communication Complexity}, 2023.

\bibitem{Back14}
Adam Back, Matt Corallo, Luke Dashjr, Mark Friedenbach, Gregory Maxwell, Andrew
  Miller, Andrew Poelstra, Jorge Tim{\'o}n, and Pieter Wuille.
\newblock Enabling blockchain innovations with pegged sidechains.
\newblock {\em URL: http://www. opensciencereview.
  com/papers/123/enablingblockchain-innovations-with-pegged-sidechains}, 2014.

\bibitem{bagaria2019prism}
Vivek Bagaria, Sreeram Kannan, David Tse, Giulia Fanti, and Pramod Viswanath.
\newblock Prism: Deconstructing the blockchain to approach physical limits.
\newblock In {\em ACM SIGSAC Conference on Computer and Communications
  Security}, 2019.

\bibitem{baudet2020fastpay}
Mathieu Baudet, George Danezis, and Alberto Sonnino.
\newblock Fastpay: High-performance byzantine fault tolerant settlement.
\newblock In {\em ACM Conference on Advances in Financial Technologies}, 2020.

\bibitem{bonneau2020coda}
Joseph Bonneau, Izaak Meckler, Vanishree Rao, and Evan Shapiro.
\newblock Coda: Decentralized cryptocurrency at scale.
\newblock {\em Cryptology ePrint Archive}, 2020.

\bibitem{bousfield2022arbitrum}
Lee Bousfield, Rachel Bousfield, Chris Buckland, Ben Burgess, Joshua Colvin,
  Ed~Felten, Steven Goldfeder, Daniel Goldman, Braden Huddleston, and
  H~Kalonder.
\newblock Arbitrum nitro: A second-generation optimistic rollup, 2022.

\bibitem{Bowe20}
Sean Bowe, Alessandro Chiesa, Matthew Green, Ian Miers, Pratyush Mishra, and
  Howard Wu.
\newblock Zexe: Enabling decentralized private computation.
\newblock In {\em IEEE Symposium on Security and Privacy (SP)}, 2020.

\bibitem{chaliasos2024analyzing}
Stefanos Chaliasos, Itamar Reif, Adri{\`a} Torralba-Agell, Jens Ernstberger,
  Assimakis Kattis, and Benjamin Livshits.
\newblock Analyzing and benchmarking zk-rollups.
\newblock In {\em Advances in Financial Technologies (AFT)}, 2024.

\bibitem{Danezis16}
George Danezis and Sarah Meiklejohn.
\newblock Centrally banked cryptocurrencies.
\newblock In {\em Network and Distributed System Security Symposium (NDSS)},
  2016.

\bibitem{dang2019towards}
Hung Dang, Tien Tuan~Anh Dinh, Dumitrel Loghin, Ee-Chien Chang, Qian Lin, and
  Beng~Chin Ooi.
\newblock Towards scaling blockchain systems via sharding.
\newblock In {\em International Conference on Management of Data}, 2019.

\bibitem{david2022gearbox}
Bernardo David, Bernardo Magri, Christian Matt, Jesper~Buus Nielsen, and Daniel
  Tschudi.
\newblock Gearbox: Optimal-size shard committees by leveraging the
  safety-liveness dichotomy.
\newblock In {\em ACM SIGSAC Conference on Computer and Communications
  Security}, 2022.

\bibitem{ernstberger2024zk}
Jens Ernstberger, Stefanos Chaliasos, George Kadianakis, Sebastian Steinhorst,
  Philipp Jovanovic, Arthur Gervais, Benjamin Livshits, and Michele Orr{\`u}.
\newblock zk-bench: A toolset for comparative evaluation and performance
  benchmarking of snarks.
\newblock In {\em International Conference on Security and Cryptography for
  Networks}, 2024.

\bibitem{feldman2004free}
Michal Feldman, Christos Papadimitriou, John Chuang, and Ion Stoica.
\newblock Free-riding and whitewashing in peer-to-peer systems.
\newblock In {\em ACM SIGCOMM workshop on Practice and theory of incentives in
  networked systems}, 2004.

\bibitem{Fisch19}
Ben Fisch.
\newblock Tight proofs of space and replication.
\newblock In {\em Annual International Conference on the Theory and
  Applications of Cryptographic Techniques}, 2019.

\bibitem{gai2021cumulus}
Fangyu Gai, Jianyu Niu, Seyed~Ali Tabatabaee, Chen Feng, and Mohammad Jalalzai.
\newblock Cumulus: A secure bft-based sidechain for off-chain scaling.
\newblock In {\em International Symposium on Quality of Service (IWQOS)}, 2021.

\bibitem{gao2022pshard}
Jianbo Gao, Jiashuo Zhang, Yue Li, Jiakun Hao, Ke~Wang, Zhi Guan, and Zhong
  Chen.
\newblock Pshard: a practical sharding protocol for enterprise blockchain.
\newblock In {\em International Conference on Blockchain Technology and
  Applications}, 2022.

\bibitem{garay2015bitcoin}
Juan Garay, Aggelos Kiayias, and Nikos Leonardos.
\newblock The bitcoin backbone protocol: Analysis and applications.
\newblock In {\em Annual International Conference on the Theory and
  Applications of Cryptographic Techniques}, 2015.

\bibitem{garoffolo2020zendoo}
Alberto Garoffolo, Dmytro Kaidalov, and Roman Oliynykov.
\newblock Zendoo: A zk-snark verifiable cross-chain transfer protocol enabling
  decoupled and decentralized sidechains.
\newblock In {\em International Conference on Distributed Computing Systems
  (ICDCS)}, 2020.

\bibitem{Garoffolo18}
Alberto Garoffolo and Robert Viglione.
\newblock Sidechains: Decoupled consensus between chains.
\newblock {\em arXiv preprint arXiv:1812.05441}, 2018.

\bibitem{Gavzi19}
Peter Ga{\v{z}}i, Aggelos Kiayias, and Dionysis Zindros.
\newblock Proof-of-stake sidechains.
\newblock In {\em IEEE Symposium on Security and Privacy (SP)}, 2019.

\bibitem{gencer2017short}
Adem~Efe Gencer, Robbert van Renesse, and Emin~G{\"u}n Sirer.
\newblock Short paper: Service-oriented sharding for blockchains.
\newblock In {\em International Conference on Financial Cryptography and Data
  Security}, 2017.

\bibitem{Gilad17}
Yossi Gilad, Rotem Hemo, Silvio Micali, Georgios Vlachos, and Nickolai
  Zeldovich.
\newblock Algorand: Scaling byzantine agreements for cryptocurrencies.
\newblock In {\em Symposium on Operating Systems Principles (SOSP)}, 2017.

\bibitem{hong2021pyramid}
Zicong Hong, Song Guo, Peng Li, and Wuhui Chen.
\newblock Pyramid: A layered sharding blockchain system.
\newblock In {\em IEEE Conference on Computer Communications (INFOCOM)}, 2021.

\bibitem{huang2020repchain}
Chenyu Huang, Zeyu Wang, Huangxun Chen, Qiwei Hu, Qian Zhang, Wei Wang, and Xia
  Guan.
\newblock Repchain: A reputation-based secure, fast, and high incentive
  blockchain system via sharding.
\newblock {\em IEEE Internet of Things Journal}, 8(6), 2020.

\bibitem{kalodner2018arbitrum}
Harry Kalodner, Steven Goldfeder, Xiaoqi Chen, S~Matthew Weinberg, and Edward~W
  Felten.
\newblock Arbitrum: Scalable, private smart contracts.
\newblock In {\em USENIX Security Symposium}, 2018.

\bibitem{Karamshuk15}
Dmytro Karamshuk, Nishanth Sastry, Andrew Secker, and Jigna Chandaria.
\newblock Isp-friendly peer-assisted on-demand streaming of long duration
  content in bbc iplayer.
\newblock In {\em IEEE Conference on Computer Communications (INFOCOM)}, 2015.

\bibitem{Kiayias19}
Aggelos Kiayias and Dionysis Zindros.
\newblock Proof-of-work sidechains.
\newblock In {\em International Conference on Financial Cryptography and Data
  Security}, 2019.

\bibitem{koegl2023attacks}
Adrian Koegl, Zeeshan Meghji, Donato Pellegrino, Jan Gorzny, and Martin Derka.
\newblock Attacks on rollups.
\newblock In {\em International Workshop on Distributed Infrastructure for the
  Common Good}, 2023.

\bibitem{Kogias16}
Eleftherios~Kokoris Kogias, Philipp Jovanovic, Nicolas Gailly, Ismail Khoffi,
  Linus Gasser, and Bryan Ford.
\newblock Enhancing bitcoin security and performance with strong consistency
  via collective signing.
\newblock In {\em USENIX Security Symposium}, 2016.

\bibitem{Kokoris18}
Eleftherios Kokoris-Kogias, Philipp Jovanovic, Linus Gasser, Nicolas Gailly,
  Ewa Syta, and Bryan Ford.
\newblock Omniledger: A secure, scale-out, decentralized ledger via sharding.
\newblock In {\em IEEE Symposium on Security and Privacy (SP)}, 2018.

\bibitem{lee2021hierarchical}
Nam-Yong Lee.
\newblock Hierarchical multi-blockchain system for parallel computation in
  cryptocurrency transfers and smart contracts.
\newblock {\em Applied Sciences}, 11(21), 2021.

\bibitem{li2022achieving}
Canlin Li, Huawei Huang, Yetong Zhao, Xiaowen Peng, Ruijie Yang, Zibin Zheng,
  and Song Guo.
\newblock Achieving scalability and load balance across blockchain shards for
  state sharding.
\newblock In {\em International Symposium on Reliable Distributed Systems
  (SRDS)}, 2022.

\bibitem{li2020decentralized}
Chenxin Li, Peilun Li, Dong Zhou, Zhe Yang, Ming Wu, Guang Yang, Wei Xu, Fan
  Long, and Andrew Chi-Chih Yao.
\newblock A decentralized blockchain with high throughput and fast
  confirmation.
\newblock In {\em USENIX Annual Technical Conference}, 2020.

\bibitem{li2023security}
Jiasun Li.
\newblock On the security of optimistic blockchain mechanisms.
\newblock {\em Available at SSRN 4499357}, 2023.

\bibitem{liu2024dynashard}
Ao~Liu, Jing Chen, Kun He, Ruiying Du, Jiahua Xu, Cong Wu, Yebo Feng, Teng Li,
  and Jianfeng Ma.
\newblock Dynashard: Secure and adaptive blockchain sharding protocol with
  hybrid consensus and dynamic shard management.
\newblock {\em IEEE Internet of Things Journal}, 2024.

\bibitem{liu2024pianist}
Tianyi Liu, Tiancheng Xie, Jiaheng Zhang, Dawn Song, and Yupeng Zhang.
\newblock Pianist: Scalable zkrollups via fully distributed zero-knowledge
  proofs.
\newblock In {\em IEEE Symposium on Security and Privacy (SP)}, 2024.

\bibitem{Locher2006free}
Thomas Locher, Patrick Moor, Stefan Schmid, and Roger Wattenhofer.
\newblock {Free Riding in BitTorrent is Cheap}.
\newblock In {\em Workshop on Hot Topics in Networks (HotNets)}, 2006.

\bibitem{Luu16}
Loi Luu, Viswesh Narayanan, Chaodong Zheng, Kunal Baweja, Seth Gilbert, and
  Prateek Saxena.
\newblock A secure sharding protocol for open blockchains.
\newblock In {\em ACM SIGSAC Conference on Computer and Communications Security
  (CCS)}, 2016.

\bibitem{manasse1995millicent}
Mark~S Manasse.
\newblock The millicent protocols for electronic commerce.
\newblock In {\em USENIX Workshop on Electronic Commerce}, 1995.

\bibitem{mizrahi2020blockchain}
Avi Mizrahi and Ori Rottenstreich.
\newblock Blockchain state sharding with space-aware representations.
\newblock {\em IEEE Transactions on Network and Service Management}, 18(2),
  2020.

\bibitem{Moran19}
Tal Moran and Ilan Orlov.
\newblock Simple proofs of space-time and rational proofs of storage.
\newblock In {\em Annual International Cryptology Conference}, 2019.

\bibitem{chainboost-paper}
Zahra Motaqy, Mohamed~E Najd, and Ghada Almashaqbeh.
\newblock chainboost: A secure performance booster for blockchain-based
  resource markets.
\newblock In {\em IEEE 9th European Symposium on Security and Privacy
  (EuroS\&P)}, 2024.

\bibitem{nguyen2019optchain}
Lan~N Nguyen, Truc~DT Nguyen, Thang~N Dinh, and My~T Thai.
\newblock Optchain: Optimal transactions placement for scalable blockchain
  sharding.
\newblock In {\em International Conference on Distributed Computing Systems
  (ICDCS)}, 2019.

\bibitem{Connor17}
Russell O’Connor and Marta Piekarska.
\newblock Enhancing bitcoin transactions with covenants.
\newblock In {\em International Conference on Financial Cryptography and Data
  Security}, 2017.

\bibitem{pass2017analysis}
Rafael Pass, Lior Seeman, and Abhi Shelat.
\newblock Analysis of the blockchain protocol in asynchronous networks.
\newblock In {\em Annual International Conference on the Theory and
  Applications of Cryptographic Techniques}, 2017.

\bibitem{pass2017fruitchains}
Rafael Pass and Elaine Shi.
\newblock Fruitchains: A fair blockchain.
\newblock In {\em ACM symposium on principles of distributed computing}, 2017.

\bibitem{ranchal2019platypus}
Alejandro Ranchal-Pedrosa and Vincent Gramoli.
\newblock Platypus: Offchain protocol without synchrony.
\newblock In {\em International Symposium on Network Computing and Applications
  (NCA)}, 2019.

\bibitem{Rovzman21}
Nejc Rozman, Janez Diaci, and Marko Corn.
\newblock Scalable framework for blockchain-based shared manufacturing.
\newblock {\em Robotics and Computer-Integrated Manufacturing}, 71, 2021.

\bibitem{shacham2008compact}
Hovav Shacham and Brent Waters.
\newblock Compact proofs of retrievability.
\newblock In {\em International conference on the theory and application of
  cryptology and information security}, 2008.

\bibitem{tao2024throughput}
Liping Tao, Yang Lu, Yuqi Fan, Lei Shi, Chee~Wei Tan, and Zhen Wei.
\newblock Throughput-scalable shard reorganization tailored to node relations
  in sharding blockchain networks.
\newblock {\em IEEE Transactions on Computational Social Systems}, 2024.

\bibitem{tao2020sharding}
Yuechen Tao, Bo~Li, Jingjie Jiang, Hok~Chu Ng, Cong Wang, and Baochun Li.
\newblock On sharding open blockchains with smart contracts.
\newblock In {\em International Conference on Data Engineering (ICDE)}, 2020.

\bibitem{wang2019mono}
Jiaping Wang and Hao Wang.
\newblock Monoxide: Scale out blockchains with asynchronous consensus zones.
\newblock In {\em USENIX Symposium on Networked Systems Design and
  Implementation (NSDI)}, 2019.

\bibitem{xu2021occam}
Jie Xu, Yingying Cheng, Cong Wang, and Xiaohua Jia.
\newblock Occam: A secure and adaptive scaling scheme for permissionless
  blockchain.
\newblock In {\em International Conference on Distributed Computing Systems
  (ICDCS)}, 2021.

\bibitem{xu2022poster}
Yibin Xu, Tijs Slaats, and Boris D{\"u}dder.
\newblock Poster: Unanimous-majority-pushing blockchain sharding throughput to
  its limit.
\newblock In {\em ACM SIGSAC Conference on Computer and Communications Security
  (CCS)}, 2022.

\bibitem{yang2003ppay}
Beverly Yang and Hector Garcia-Molina.
\newblock Ppay: Micropayments for peer-to-peer systems.
\newblock In {\em ACM conference on Computer and communications security},
  2003.

\bibitem{yu2020ohie}
Haifeng Yu, Ivica Nikoli{\'c}, Ruomu Hou, and Prateek Saxena.
\newblock Ohie: Blockchain scaling made simple.
\newblock In {\em IEEE Symposium on Security and Privacy (SP)}, 2020.

\bibitem{zamani2018}
Mahdi Zamani, Mahnush Movahedi, and Mariana Raykova.
\newblock Rapidchain: Scaling blockchain via full sharding.
\newblock In {\em ACM Conference on Computer and Communications Security
  (CCS)}, 2018.

\bibitem{zheng2021meepo}
Peilin Zheng, Quanqing Xu, Zibin Zheng, Zhiyuan Zhou, Ying Yan, and Hui Zhang.
\newblock Meepo: Sharded consortium blockchain.
\newblock In {\em International Conference on Data Engineering (ICDE)}, 2021.

\end{thebibliography}

\appendix
\section{Extended System Model}
\label{apdx:system-model}

The resource market offers the following functionalities:
\vspace{-4pt}
\begin{description}
    \item[$\syssetup(1^{\lambda}) \rightarrow (\param, \led_0)$:] Takes $\lambda$ as input, and outputs the system public parameters $\param$ and an initial ledger state $\led_0$ (which is the genesis block).\footnote{For the rest of the algorithms, if not mentioned, the input $\param$ is implicit.}

    \item[$\partysetup(\param) \rightarrow (\stt)$:] Takes $\param$ as input, and outputs the initial state of the party $\stt$ (containing a keypair $(\sk, \pk)$, and for miners, the current ledger state $\led$).

    \item[$\createTransaction(\txtype, \aux) \rightarrow (\tx)$:] Takes as input the transaction type $\txtype$ and auxiliary information/inputs $\aux$, and outputs a transaction $\tx$ of one of the following types:
    \begin{itemize}\itemsep0pt
        \item $\tx_{\ask}$: Used by a client to state its service needs.
        \item $\tx_{\offer}$: Used by a server to state its service offering.
        \item $\tx_{\agreement}$: Service agreement between a client and a server (or a set of servers).
        \item $\tx_{\serviceProof}$: Service delivery proof submitted by a server.
        \item $\tx_{\servicePayment}$: Server payment for the provided service.
        \item $\tx_{\dispute}$: Initiates a dispute for a misbehavior.
        \item $\tx_{\transfer}$: Currency transfer between participants.
    \end{itemize}

    \item[$\verifyTransaction(\tx) \rightarrow (0/1)$:] It outputs 1 if the transaction $\tx$ is valid based on its type semantics/syntax, and 0 otherwise.

    \item[$\verifyBlock(\led_{\mainc}, \block) \rightarrow (0/1)$:] Takes as input the current mainchain ledger state $\led_{\mainc}$ and a new block $\block$, if the block is valid based on the semantics/syntax/current state of the chain, it outputs 1, otherwise, it outputs 0.

    \item[$\updateState(\led_{\mainc}, \{\tx_i\}) \rightarrow (\led_{\mainc}')$:] Takes as input the current mainchain ledger state $\led_{\mainc}$, and a set of pending transactions $\{\tx_i\}$, updates the ledger state based on the changes induced by these transactions and outputs updated ledger $\led_{\mainc}'$.
\end{description}

$\updateState$ extends the ledger with a new block containing new transactions based on the used consensus protocol. 

It should be noted that the notion above captures a base resource market. Resource markets in practice may have additional modules, with an additional set of transactions, and the market may offer additional non-service-related transactions beyond just $\tx_{\transfer}$. The model above can be extended to cover these additional semantics.

\systwo's functionalities can be abstracted as follows:
\begin{description}
    \item[$\setup(\led_{\mainc}^0, \modules) \rightarrow (\{\led_{\sidec, \module}^0\}_{\module \in \modules}, \led_{\mainc}')$:] Takes as input the mainchain genesis block $\led_{\mainc}^0$ and the set of the functional modules in the market. It creates a sidechain for each module and outputs the initial state of these sidechains $\{\led_{\sidec, \module}^0\}_{\module \in \modules}$. $\led_{\sidec, \module}^0$ is the genesis block containing the module/sidechain specific public parameters $\param_\module$ including traffic classification, summary rules, and sidechain consensus configuration. $\setup$ also creates the mainchain state variables to be synced by the sidechains, thus producing an updated mainchain ledger $\led_{\mainc}'$.

    \item[$\elect(\led_{\mainc}) \rightarrow \left(\{\{\com_{i}\}, \{\leader_{i}\}\}_{\module \in \modules}\right)$:] Takes the current state of the mainchain $\led_{\mainc}$  as input, and outputs a set of committees $\{\com_{i}\}$ and their leaders $\{\leader_{i}\}$ for all module sidechains. Each sidechain will have a set of committees and their leaders (indexed by $i$) such that the first committee is the primary committee while the rest are backup committees.

    \item[$\createSyncTransaction(\aux) \rightarrow (\tx_{\sync, \module})$:] Takes as input information $\aux$, and outputs a sync-transaction for a particular module sidechain.

    \item[$\verifySyncTransaction(\led_{\sidec, \module},\tx_{\sync, \module}) \rightarrow (0/1)$:] Takes as input a module's sidechain ledger $\led_{\sidec, \module}$ and its corresponding $\tx_{\sync, \module}$, and outputs 1 if $\tx_{\sync, \module}$ is valid based on its syntax/semantics, and 0 otherwise.

    \item[$\verifyBlock(\led_{\sidec, \module}, \block_{\btype}) \rightarrow (0/1)$:] Takes as input a module's sidechain ledger state $\led_{\sidec, \module}$, a new block $\block$ of $\btype = \meta$ or $\btype = \summary$. It outputs 1 if $\block$ is valid based on the syntax/semantics of the particular block type and the module, and 0 otherwise.

    \item[$\updateState(\led_{\sidec, \module}, \aux, \btype) \rightarrow (\led_{\sidec, \module}')$:] Takes as input a module's sidechain ledger state $\led_{\sidec, \sidechain}$, and pending transactions $\aux = \{\tx_i\}$ (if $\btype = \meta$) or $\bot$ (if $\btype = \summary$ since the inputs are meta-blocks from $\led_{\sidec, \module}$). It reflects the changes induced by $\aux$ and outputs a new ledger state $\led_{\sidec, \module}'$.

    \item[$\prune(\led_{\sidec, \module}) \rightarrow (\led_{\sidec, \module}')$:] Takes as input a module's sidechain ledger state $\led_{\sidec, \sidechain}$, and produces an updated state $\led_{\sidec, \sidechain}'$ in which all stale meta-blocks are discarded.
\end{description}
\section{Class Composition Analysis} 
\label{apdx:class-comp}

In this section, we determine the class composition of a weighted-sortition committee so it has a lower failure probability than a similarly sized committee elected using random sortition. In particular, we formulate the probability of a weighted-sortition committee failing and bound it to a failure threshold $F_w < F$ (where $F$ is the failure threshold of a random-sortition committee). Then we determine the values $n_i$ sampled from each class $i$ that satisfy the failure threshold $F_w$ for a committee of size $\cs$.

\noindent\underline{Notation:} Our system has $C$ classes. Each class has $\mu$ members, of which $\mathscr{M}_i$ are misbehaving (lazy or malicious). 

\begin{lemma}
The failure probability of a committee using weighted sortition is $
    \Pr(sc \mhyphen fail) =1 - \Pr(sc \mhyphen success)$, where:
\begin{equation*}
    \begin{aligned}    
    \Pr (sc \mhyphen success) &= \sum_{x_{1}=0}^{n_1}\sum_{x_{2}=0}^{n_2} \cdots \sum_{x_{C}=0}^{n_C} \prod_{i=0}^C \binom{n_i}{x_i}p_i^{x_i}(1-p_i)^{n_i-x_i} \\
                                 &\text{ s.t. }  \sum_{i=0}^C x_i < \theta_l
    \end{aligned}
\end{equation*}

\end{lemma}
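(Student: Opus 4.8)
The plan is to establish the formula for the success probability (after which the failure probability follows trivially by complementation), and the strategy is a direct probabilistic decomposition over the class-by-class breakdown of adversarial committee members. First I would fix notation: let $x_i$ denote the number of misbehaving miners drawn into the committee from class $i$, where $0 \le x_i \le n_i$. Since $n_i$ miners are sampled from class $i$ and each such miner is misbehaving independently with probability $p_i = \mathscr{M}_i / \mu$ (the fraction of misbehaving members of that class), the number $x_i$ is a Binomial$(n_i, p_i)$ random variable, giving $\Pr(x_i \text{ misbehaving from class } i) = \binom{n_i}{x_i} p_i^{x_i} (1-p_i)^{n_i - x_i}$. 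The key modelling observation to state explicitly is that the draws across different classes are independent (the classes partition the miner population and each module's committee is allocated a fixed quota per class), so the joint probability of a particular vector $(x_1, \dots, x_C)$ factors as the product $\prod_{i=0}^{C} \binom{n_i}{x_i} p_i^{x_i}(1-p_i)^{n_i - x_i}$.

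Next I would characterize the ``success'' event. The committee preserves liveness precisely when the total number of misbehaving members stays below the liveness threshold $\theta_l$ (recall $2f+2$ honest votes are needed in a committee of size $3f+2$, so more than a threshold of lazy-plus-malicious nodes breaks liveness). Hence the success event is $\{\sum_{i=0}^{C} x_i < \theta_l\}$, and summing the joint probability over all vectors in this region yields
\begin{equation*}
\Pr(sc\mhyphen success) = \sum_{\substack{0 \le x_i \le n_i \\ \sum_i x_i < \theta_l}} \prod_{i=0}^{C} \binom{n_i}{x_i} p_i^{x_i} (1-p_i)^{n_i - x_i},
\end{equation*}
which is exactly the nested-sum expression in the statement once the single constrained sum is unrolled into $C$ iterated sums with the side condition $\sum_i x_i < \theta_l$. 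The failure probability is then $\Pr(sc\mhyphen fail) = 1 - \Pr(sc\mhyphen success)$ by definition of complementary events.

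One subtlety I would address carefully is whether sampling within a class is with replacement (genuinely independent Bernoulli trials, giving the clean Binomial above) or without replacement (which would technically give a Hypergeometric distribution per class). I expect the paper intends the Binomial approximation — standard in the cryptographic-sortition literature since each miner independently runs a VRF-based coin flip as in Algorithm~\ref{alg:cap} — so I would justify the product-of-binomials form by appealing to that independent-coin-flip election mechanism rather than to urn sampling, making the independence both across and within classes a direct consequence of independent VRF evaluations. That, rather than any hard calculation, is the main thing to get right; the rest is bookkeeping. I would close by noting that this closed form is what Appendix~\ref{apdx:class-comp} then uses to pick the per-class quotas $n_i$ achieving a target failure bound $F_w < F$.
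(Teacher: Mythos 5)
Your proposal is correct and follows essentially the same route as the paper: independence across classes, a per-class Binomial count of misbehaving members with $p_i = \mathscr{M}_i/\mu$, a sum of the product form over the region $\sum_i x_i < \theta_l$, and complementation. The one point you flagged as a subtlety is resolved in the paper the other way around from your preference --- it models each class draw as sampling without replacement, writes the exact hypergeometric probability $\binom{\mathscr{M}_i}{x_i}\binom{\mu-\mathscr{M}_i}{n_i-x_i}/\binom{\mu}{n_i}$, and then approximates it by the Binomial for large $\mu$, rather than appealing to independent VRF coin flips --- but both justifications land on the same formula.
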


\begin{proof}

From each class $i \in \{1,\cdots, C\}$, we draw $n_i$ members of the committee such that a committee member is not elected twice. Thus, we can model this selection as a sampling without replacement from the population $\mu$. The probability of choosing $x_i$ misbehaving members from a class $i$ is:
\begin{equation*}
\Pr(X_i = x_i) = \frac{\binom{\mathscr{M}_i}{x_i} \binom{\mu - \mathscr{M}_i}{n_i - x_i}}{\binom{\mu}{n_i}}
\end{equation*}

In a network with a large node population $N$, and since $C \ll N$, the class population $\mu$ is also large. Thus, selecting $x_i$ misbehaving members of a class $i$ can be modeled using a Binomial distribution, where the probability of a miner selected from a class $i$ being malicious is given by $p_i = \frac{\mathscr{M}_i}{\mu}$. As such, the probability of selecting $x_i$ misbehaving members from a class $i$ can be approximated as:
\begin{equation*}
    \Pr(X_i = x_i) \approx \binom{n_i}{x_i}p_i^{x_i}(1-p_i)^{n_i-x_i} 
\end{equation*}

Since the classes do not share members, drawing $x_\alpha$ misbehaving miners from a class $\alpha$ and $x_{\beta}$ from $\beta$, where $\alpha \neq \beta$, are independent events. Thus, we obtain:
\begin{equation*}
    \Pr\bigg(\bigcap_{i=0}^C X_i = x_i\bigg) = \prod_{i=0}^{C} Pr(X_i = x_i)
\end{equation*}

We define $X$ to be the random variable corresponding to the total number of misbehaving nodes chosen in a committee. From each class, a committee can contain $0 \leq x_i \leq n_i$ misbehaving nodes. The probability of creating a committee with $\sum_{i=0}^C x_i$ misbehaving nodes is:
\begin{equation*}
	\Pr(X =  \sum_{i=0}^C x_i) = \sum_{x_{1}=0}^{n_1}\sum_{x_{2}=0}^{n_2} \cdots \sum_{x_{C}=0}^{n_C} \prod_{i=0}^C \binom{n_i}{x_i}p_i^{x_i}(1-p_i)^{n_i-x_i}
\end{equation*}

A committee succeeds when $\sum_{i=0}^C x_i < \theta_l$, where $\theta_l$ is the liveness threshold (number of absent votes that will violate liveness). Thus, the probability of this event happening is: 
\begin{equation*}
    \begin{aligned}
        \Pr(sc\mhyphen success) &= \Pr(X =  \sum_{i=0}^C x_i |   \sum_{i=0}^C x_i < \theta_l) \\
                                 & =  \sum_{x_{1}=0}^{n_1}\sum_{x_{2}=0}^{n_2} \cdots \sum_{x_{C}=0}^{n_C} \prod_{i=0}^C \binom{n_i}{x_i}p_i^{x_i}(1-p_i)^{n_i-x_i} \\
                                 &\hspace{10pt}\text{ s.t }  \sum_{i=0}^C x_i < \theta_l
    \end{aligned}
\end{equation*}

\end{proof}

We illustrate the formula above with the following example: In a system where the adversary controls 25\% of the miner population, a committee of 747 miners chosen using random sortition achieves a failure probability of $10^{-3}$. In a similar system, where miners are separated into 3 classes, with $p_1 = 0.15$, $p_2 = 0.25$  and $p_3 = 0.35$, a weighted-sortition committee of the same size, with $n_1 = 349, n_2 = 249, n_3 = 149$ achieves a failure probability of $1.42 \times 10^{-12}$.
\section{Security Analysis of \systwo}
\label{apdx:sec-analysis}
Recall that a decentralized resource market operates a blockchain that manages and processes the market workload. A secure market means that it operates a secure blockchain that satisfies liveness and safety. This means that all transactions in the market are processed based on its network protocol, only valid blocks (containing valid transactions) are accepted, and that the system is live in the sense that it processes its workload and its blockchain is growing over time. We recall the definition of a secure ledger below.

\emph{Ledger security.} A ledger $\led$ is secure if it satisfies the following properties~\cite{garay2015bitcoin}:
\begin{description}
\item[Safety:] For any two time rounds $t_1$ and $t_2$ such that $t_1 \leq t_2$, and any two honest parties $P_1$ and $P_2$, the confirmed state of $\led$ (which includes all blocks buried under at least $k$ blocks, where $k$ is the depth parameter) maintained by $P_1$ at $t_1$ is a prefix of the confirmed state of $\led$ maintained by party $P_2$ at time $t_2$ with overwhelming probability.

\item[Liveness:] A valid transaction $tx$ broadcast at time round $t$ will be recorded on $\led$ at time at most $t + u$ with overwhelming probability, where $u$ is the liveness parameter.
\end{description}

Validity of market workload processing is governed by its network protocol. Thus, honest miners on the mainchain and the sidechain process the workload they receive using the original logic of the market protocol. To show that \systwo is secure, we have to prove that deploying \systwo does not violate the liveness and safety of the underlying market. We note that our proofs rely heavily on those in~\cite{chainboost-paper}; essentially we extend them  to cover the multi-sidechain setting. For completeness, we state the full proof arguments below.

Furthermore, we recall the failure probability of the autorecovery protocol of \sysone (note that this assumes random committee election).

\begin{lemma}[Lemma 3 from~\cite{chainboost-paper}]
\label{theorem-sec}
The probability of the event that \sysone's autorecovery fails (denoted as $AF$) can be expressed as:
\begin{align*}
\Pr[AF] = \sum_{i=(\kappa+1) \lthresh}^{(\kappa+1) \cs} {\frac{\binom{\mathscr{M}}{i}\binom{N-\mathscr{M}}{(\kappa+1) \cs - i}}{\binom{N}{(\kappa+1) \cs}}}
    {\left(\frac{[y^{i}]\Psi(y)}{\binom{(\kappa+1) \cs}{i}}\right)}
\end{align*}    
\end{lemma}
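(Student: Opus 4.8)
\textbf{Proof proposal for Lemma~\ref{theorem-sec}.}

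The plan is to reconstruct the autorecovery failure probability from its combinatorial building blocks, following the structure of~\cite{chainboost-paper} but keeping the multi-sidechain bookkeeping explicit. First I would fix the relevant quantities: the system has $N$ miners of which $\mathscr{M}$ are adversarial (lazy or malicious), the primary committee plus $\kappa$ backups are drawn, each of size $\cs$, so $(\kappa+1)\cs$ distinct miners are sampled without replacement, and autorecovery fails exactly when \emph{every one} of the $\kappa+1$ committees violates liveness, i.e.\ contains at least $\lthresh$ adversarial members. The first step is therefore to condition on $i$, the total number of adversarial miners that land in the union of all $\kappa+1$ committees; since we sample $(\kappa+1)\cs$ miners uniformly without replacement from the pool of $N$, the count $i$ is hypergeometric, giving the factor $\binom{\mathscr{M}}{i}\binom{N-\mathscr{M}}{(\kappa+1)\cs - i}\big/\binom{N}{(\kappa+1)\cs}$. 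The summation range is $i$ from $(\kappa+1)\lthresh$ (the minimum adversarial mass needed to poison all committees) up to $(\kappa+1)\cs$ (all sampled slots adversarial).

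The second, and more delicate, step is to compute the conditional probability that, \emph{given} exactly $i$ adversarial miners are distributed among $(\kappa+1)\cs$ labeled slots partitioned into $\kappa+1$ blocks of size $\cs$, each block receives at least $\lthresh$ of them. Conditioned on the identity set, the $i$ adversarial miners occupy a uniformly random $i$-subset of the $(\kappa+1)\cs$ slots, so this is a pure balls-in-boxes count: the number of ways to place $i$ indistinguishable adversarial slots into $\kappa+1$ boxes of capacity $\cs$ with each box getting $\ge \lthresh$, divided by $\binom{(\kappa+1)\cs}{i}$. The numerator is exactly the coefficient extraction $[y^i]\Psi(y)$ where $\Psi(y) = \left(\sum_{j=\lthresh}^{\cs}\binom{\cs}{j}y^j\right)^{\kappa+1}$ is the generating function whose $(\kappa+1)$-fold product enforces ``at least $\lthresh$ per committee'' simultaneously — this $\Psi$ is inherited verbatim from~\cite{chainboost-paper}. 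Multiplying the hypergeometric factor by $[y^i]\Psi(y)/\binom{(\kappa+1)\cs}{i}$ and summing over the admissible $i$ yields the displayed formula for $\Pr[AF]$.

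The main obstacle I anticipate is \emph{justifying the reduction to a single pooled hypergeometric draw followed by a uniform reshuffle into committees}, rather than $\kappa+1$ dependent sequential draws. The subtlety is that when the primary committee is formed first, then the first backup from the remaining pool, and so on, the adversarial counts across committees are negatively correlated in a way that is awkward to handle directly; the clean trick (and the one~\cite{chainboost-paper} uses) is to observe that drawing $(\kappa+1)\cs$ miners all at once and then partitioning them into the ordered committees is distributionally identical to the sequential procedure, which then decouples the ``how many adversaries total'' question (hypergeometric in $N$) from the ``how are they spread'' question (uniform over slot-subsets, hence the generating-function count). I would state this equivalence as the key lemma of the argument and verify it by a short symmetry/exchangeability observation. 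The remaining pieces — that $\lthresh$ absent votes genuinely breaks liveness under the $3f+2$ / $2f+2$ PBFT thresholds, and that the events ``committee $j$ fails'' are the right ones to intersect — are routine given the adversary model already fixed in the preliminaries, so I would cite them rather than re-derive them.
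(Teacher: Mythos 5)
Your derivation is correct and matches the structure that the formula itself encodes: a hypergeometric draw for the total number of misbehaving miners $i$ among the $(\kappa+1)\cs$ pooled, disjoint committee slots, followed by the conditional count $[y^{i}]\Psi(y)/\binom{(\kappa+1)\cs}{i}$ for spreading them so that the primary and all $\kappa$ backup committees each reach the liveness-violation threshold $\lthresh$, with the exchangeability observation justifying the reduction from sequential committee election to a single pooled draw. Note that this paper does not give its own proof of this statement --- it is imported verbatim as Lemma 3 of~\cite{chainboost-paper} and only used afterwards via the union bound $\Pr[AF_{\systwo}] \leq k \Pr[AF]$ --- so there is nothing in the paper for your argument to diverge from, and your reconstruction is the intended one.
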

\noindent where $\mathscr{M}$ is the number of misbehaving miners (including malicious miners and lazy honest ones), $N$ is the number of all miners, $\lthresh$ is the liveness threshold (the threshold of absent votes that will violate liveness), $S_c$ is the committee size, $\kappa$ is the number of backup committees, and $\Psi(y) = \left( \sum_{i=\lthresh}^{\cs}~\binom{\cs}{i} y^i \right)^{\kappa+1}$, and $[y^{i}]\Psi(y)$ denotes the coefficient of $y^{i}$ in $\Psi$, calculated using $[y^{i}]\Psi(y) = \frac{1}{i!} \odv*[order=i]{\Psi(0)}{y}$.

\systwo has $k$ sidechains instead of one as in \sysone. Autorecovery in \systwo fails if any of the autorecovery of any of its sidechains fails. Under the assumption that $N$ is large, and by applying a union bound, we have the failure probability of \systwo's autorecovery, denoted as $\Pr[AF_{\systwo}]$, under random committee election is $\Pr[AF_{\systwo}] \leq k \Pr[AF]$. 

\begin{theorem}
    Deploying \systwo on top of a secure decentralized resource market does not violate safety and liveness of this market.
\end{theorem}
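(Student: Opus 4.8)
The plan is to reduce safety and liveness of \systwo to the corresponding properties of the underlying secure resource market, using the single-sidechain arguments of~\cite{chainboost-paper} as building blocks and then extending them to the multi-sidechain setting by a modular, per-sidechain argument together with the autorecovery bound $\Pr[AF_{\systwo}] \leq k \Pr[AF]$. First I would fix notation: let the market be driven by a secure ledger $\led$ (satisfying the safety and liveness of the recalled definition), and observe that in \systwo the persistent state lives on $\led_{\mainc}$, while each $\led_{\sidec,\module}$ holds transient meta-blocks plus permanent summary-blocks that are reflected onto $\led_{\mainc}$ via sync-transactions. The overall strategy is: (i) show each individual module sidechain processes its assigned traffic correctly and in a timely manner (a local liveness/safety claim), (ii) show the sync mechanism faithfully transfers the induced state changes to the mainchain without loss or corruption, and (iii) conclude that the composition of all sidechains plus the mainchain behaves exactly as the original market's secure ledger would, up to the bounded failure probability of autorecovery.

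For step (i), I would argue per sidechain. Safety within a sidechain follows from the security of its PBFT-based consensus: since the adversary is slowly adaptive and committee sizes are chosen (as in~\cite{chainboost-paper}, and refined by the class-composition analysis of Appendix~\ref{apdx:class-comp} in weighted mode) so that at most $f$ of $3f+2$ members are malicious with overwhelming probability, all honest committee members agree on the same valid meta-blocks and summary-blocks, and honest miners only ratify blocks whose transactions pass $\verifyTransaction$ and whose state transitions follow the market's original logic. Liveness within a sidechain follows from view-change handling misbehaving leaders and backup committees handling misbehaving primary committees, so that a valid transaction broadcast to a module's sidechain appears in a meta-block within a bounded number of rounds — except on the failure event whose probability is bounded by Lemma~\ref{theorem-sec} per sidechain. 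A crucial sub-point I would invoke here is Claim~1 (zero cross-sidechain transactions under proper module identification): because every module's workload is self-contained, each committee has, from its own sidechain plus the published mainchain state plus cached copies of other sidechains, all inputs needed to process any transaction it is responsible for, so correctness does not hinge on any inter-sidechain coordination that could stall or be manipulated.

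For step (ii), I would show the sync-transactions are themselves ordinary mainchain transactions subject to $\led_{\mainc}$'s safety and liveness: a summary-block is produced deterministically from the epoch's meta-blocks by the published summary rules (Figure~\ref{fig:summary-rules}), $\verifySyncTransaction$ rejects any sync-transaction inconsistent with the sidechain's summary-block, and mass-syncing ensures that even under a mainchain rollback the state changes of all affected epochs are eventually re-synced. Hence the confirmed mainchain state at any time is a faithful aggregate of all confirmed sidechain state changes, and pruning only discards meta-blocks after their summaries are confirmed on $\led_{\mainc}$, so no confirmed state is ever lost. For step (iii), I would assemble the inter-module autorecovery argument: when a critical module (e.g., dispute) is interrupted, dependent modules mine empty meta-blocks until it recovers, so no dependent module ever processes a transaction using stale critical information — this preserves safety at the cost of a bounded, temporary liveness delay, which is absorbed into the liveness parameter $u$. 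Taking a union bound over the $k$ sidechains gives the global failure probability $k\Pr[AF]$, which is still negligible, and outside this event \systwo's behavior is indistinguishable (at the level of confirmed state and processing logic) from the original secure market; therefore safety and liveness are preserved.

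The main obstacle I anticipate is making step (ii)–(iii) rigorous in the presence of mainchain rollbacks interacting with the inter-module "mine empty blocks" rule and with hierarchical sub-sidechains: one must argue that a rollback that drops a sync-transaction does not cause a dependent module to have already acted on now-unconfirmed state, and that the timeout $\eta$ for detecting interruptions, the epoch length $\omega$, and the mass-syncing window compose so that every valid transaction is still recorded within a finite bound. Subtler still is the weighted-sortition case, where one must confirm that the class-composition guarantee $F_w < F$ from Appendix~\ref{apdx:class-comp} actually makes every committee (critical and balanced alike) satisfy the $\le f$-of-$3f+2$ bound with overwhelming probability simultaneously, so that the per-sidechain safety argument of step (i) goes through uniformly; I would handle this by a union bound over sidechains against the per-committee failure threshold and by noting that less-critical modules use a near-random distribution whose failure probability is at most that of random sortition.
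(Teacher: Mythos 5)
Your proposal is sound and relies on the same core ingredients the paper uses (PBFT security under the $f$-of-$3f{+}2$ committee bound, the autorecovery/mass-syncing machinery, the rule that meta-blocks are pruned only after their sync-transaction is confirmed, and the union bound $\Pr[AF_{\systwo}] \leq k\Pr[AF]$ over the $k$ sidechains), but it is organized quite differently from the paper's proof. The paper splits the theorem into two lemmas (preserving safety, preserving liveness) and argues each by \emph{threat enumeration}: it lists concrete market-level attacks---service agreement violation, slacking and theft, dispute circumvention, out-of-sync or invalidly synced mainchain, invalid sidechain blocks on the safety side; DoS, sidechain-liveness violations, and loss of public verifiability on the liveness side---and shows each is defeated by the relevant mechanism. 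Your proof instead is a layered composition argument: per-sidechain local correctness, faithfulness of the sync layer, then global assembly. What your route buys is a cleaner, more modular structure that makes explicit where the union bound enters and which component each mechanism secures; it also correctly flags the subtle interaction between rollbacks, the empty-block rule, and sub-sidechains, which the paper handles only implicitly by deferring to mass-syncing. What the paper's route buys is coverage you leave implicit: the market-specific attack surface (e.g., a server claiming payment after its proofs were pruned, or a penalized party re-entering under the same identity) is only subsumed in your step (iii) by the phrase ``behaves exactly as the original market's secure ledger would,'' whereas the paper shows explicitly that summary-block permanence and mainchain-controlled payments block these attacks. If you wrote your version out in full, you would want to add a short enumeration of those application-level threats to make the safety half convincing to a reader of the resource-market setting, since ledger safety alone does not obviously rule out, say, dispute circumvention.
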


We prove this theorem by proving two lemmas as follows.

\begin{lemma}[Preserving safety]
    \systwo preserves the safety of the underlying resource market
\end{lemma}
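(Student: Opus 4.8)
The plan is to reduce safety of \systwo to the safety of the underlying blockchain and the security of the PBFT consensus run by each sidechain committee, leveraging the fact that the mainchain is the single source of truth for system state. I would start by recalling what "safety" means here in the multi-chain setting: for any two honest miners and any two time rounds $t_1 \le t_2$, the confirmed state they maintain must be consistent (prefix relation) with overwhelming probability, where the confirmed state now comprises the mainchain ledger $\led_{\mainc}$ together with the summary-blocks of every sidechain $\led_{\sidec, i}$. Since temporary meta-blocks are pruned once the corresponding sync-transaction is confirmed, the only persistent state is the mainchain plus summary-blocks, so it suffices to argue consistency of these.

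The argument then proceeds in three steps. First, mainchain safety is inherited directly from the assumption that the underlying market operates a secure ledger satisfying safety in the sense of~\cite{garay2015bitcoin}; deploying \systwo only adds new transaction types (sync-transactions) and new state variables $\mathsf{summary_{variables,\modules}}$ to the mainchain, and these are processed via the ordinary $\updateState$/$\verifyBlock$ logic, so they do not weaken the chain's safety guarantee. Second, for each sidechain I would argue that a meta-block or summary-block can only be finalized if the committee's PBFT instance reaches agreement, which by the adversary model requires at least $2f+2$ votes out of a committee of size $3f+2$ with at most $f$ malicious members (this bound holding with overwhelming probability by the committee-size argument of~\cite{chainboost-paper}, carried over unchanged in the default mode, and strengthened in the weighted mode via the class-composition analysis of Appendix~\ref{apdx:class-comp}); hence at least one honest miner endorses each finalized block, and honest miners only endorse blocks whose state changes follow the original market logic applied to the finalized state. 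This gives a unique, agreed sequence of summary-blocks per sidechain, so two honest parties cannot hold conflicting summaries. Third, I would observe that the functionality-oriented split means each sidechain's summaries depend only on the mainchain's confirmed state and that sidechain's own meta-blocks (Claim~1, zero cross-sidechain transactions), so there is no circular dependency that could let an inconsistency propagate; the sync-transaction that commits a summary-block to the mainchain is itself subject to mainchain safety, closing the loop.

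Finally I would handle the reorganization and interruption cases, since these are where safety is most delicate. A mainchain rollback can orphan a sync-transaction, but mass-syncing re-issues it for all impacted epochs, and because meta-blocks are only pruned \emph{after} the sync-transaction is confirmed (buried under $k$ blocks), no committed summary is ever lost without being recoverable; thus the confirmed-state prefix property is maintained across rollbacks. For the inter-module recovery protocol, a module waiting on an interrupted critical module mines only empty meta-blocks, which summarize to empty summaries — these cannot introduce invalid state changes, so safety is vacuously preserved during the wait. I would also note that the residual failure probability is exactly $\Pr[AF_{\systwo}] \le k\,\Pr[AF]$ from Lemma~\ref{theorem-sec}, which is negligible, so "overwhelming probability" is respected.

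The main obstacle I anticipate is the bookkeeping around pruning interacting with rollbacks: I must argue carefully that the ordering "confirm sync-transaction on mainchain $\Rightarrow$ then prune meta-blocks" together with mass-syncing guarantees that every state change ever reflected in an honest party's confirmed state remains derivable after any admissible reorganization, i.e., that the pruned meta-blocks were genuinely redundant given the (possibly re-synced) mainchain. Making this precise — pinning down the invariant that the mainchain's confirmed summary state is always a faithful and complete digest of all pruned sidechain history — is the crux, and it is essentially the place where the proof must go beyond merely citing~\cite{chainboost-paper} and instead verify that nothing breaks when the single sidechain is replaced by $k$ independently-syncing ones.
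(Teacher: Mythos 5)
Your proposal is correct in substance but takes a genuinely different route from the paper. You frame safety as a ledger-theoretic prefix-consistency property of the confirmed state (mainchain plus permanent summary-blocks) and reduce it to (i) the assumed safety of the mainchain, (ii) PBFT agreement with honest-majority committees, and (iii) the absence of cross-sidechain dependencies, then separately discharge the rollback/pruning and inter-module-recovery corner cases. The paper instead proceeds by \emph{threat enumeration}: it lists four concrete attack classes --- service agreement violation/slacking/theft (parties exploiting the fact that meta-blocks containing proofs have been pruned to renege on obligations), dispute circumvention (a penalized party continuing to participate), out-of-sync or invalidly synced mainchain, and invalid sidechain blocks --- and argues each is defeated by the same pillars you invoke (consensus security, committee sizing, autorecovery plus mass-syncing, and the invariant that meta-blocks are pruned only after the sync-transaction is confirmed so summaries are publicly verifiable and permanent). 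Your approach buys a cleaner, more modular reduction and makes the pruning-after-confirmation invariant explicit as the crux, which the paper leaves somewhat implicit; the paper's approach buys direct coverage of the application-level, market-specific threats, which your ledger-centric framing handles only implicitly through the single sentence that honest miners endorse only state changes following the original market logic. If you adopt your route, you should add an explicit remark that consistency of the permanent summaries is what blocks a server or client from later contradicting pruned service records, and that dispute outcomes recorded in summaries bind future participation --- otherwise a reader may object that prefix consistency of ledgers alone does not obviously rule out dispute circumvention.
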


\begin{proof}
    Based on the mechanisms employed in \systwo design, we identify the following threats to safety:

    \begin{itemize}

        \item Service agreement violation, slacking and theft: As meta-blocks are pruned, servers and clients may try to violate service agreement terms, e.g., a server offers less service than agreed upon or a client tries to pay less that agreed-upon price. Also, a server (who did not do the work) may claim it offered a service but the service delivery proofs have been pruned, or a client (who received the service) claims that no service has been delivered, as the meta-blocks containing the proofs have been pruned, to avoid paying the servers.  
        
        \item Dispute circumvention: A party keeps participating in the system under the same identity even if they were excluded or penalized because they misbehaved.
        
        \item Out-of-sync mainchain or invalid mainchain syncing: A sidechain committee does not issue a sync-transaction at the end of the epoch, causing the mainchain state to be out of sync. Or this committee issues a sync-transaction that contains invalid summaries, causing the mainchain to be synced using invalid records (thus violating mainchain quality).
        
        \item Producing invalid sidechain blocks: A sidechain committee produces invalid meta- or summary-blocks. This violates the sidechain quality, and in turn will lead to invalid mainchain syncing. 
    \end{itemize}

    \systwo is resilient to these attacks due to using a secure PBFT-consensus run by committees that are elected in a non-biased way with a size that guarantees honest majority, and the use of a secure autorecovery protocol that ensures system recovery from interruption with overwhelming probability.

    \textit{Service agreement violation, slacking and theft} as well as \emph{dispute circumvention} are addressed as follows. The modules process their workload using the original logic of the underlying resource market. Meta-blocks containing the processed workload are not dropped until their corresponding sync-transaction is confirmed on the mainchain. Thus, anyone can verify the validity of the produced summaries based on the corresponding meta-blocks. Summary-blocks are permanent, where they contain summarized service agreements, service delivery proofs, and dispute verification/outcome. Also, payments are made to the servers on the mainchain based on the service records of these servers (i.e., clients do not control that). By the security of consensus, these summaries are guaranteed to be correct, and hence, any service claims that do not agree with these summaries will be rejected and corresponding malicious parties will be penalized. Our autorecovery protocol and the mass-syncing mechanism guarantee that they system will recover from any interruption, thus restoring its valid operation state.

    \textit{Out-of-sync mainchain, invalid syncing, or producing invalid sidechain blocks} are addressed by the security of our autorecovery protocol (along with mass-syncing), the security of consensus, and electing committees with a size that guarantees honest majority. These ensure that committees follow the prescribed protocol, and thus produce valid summaries and syn-transactions to sync the mainchain based on correct records. The security of consensus guarantees that a committee with honest majority only agrees on valid blocks, with the autorecovery protocol taking care of cases when a committee fails by having backup committees step in.

    As a result, \systwo preserves the safety of the underlying resource market.  
\end{proof}

\begin{lemma}[Preserving liveness]
    \systwo preserves the liveness of the underlying resource market.
\end{lemma}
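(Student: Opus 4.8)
The plan is to mirror the structure of the safety proof: enumerate the ways in which deploying \systwo could stall transaction processing or prevent the blockchain from growing, and argue each away using the three pillars already established — the security of the PBFT-based consensus, the unbiased committee election with honest supermajority (including the class-composition analysis of Appendix~\ref{apdx:class-comp} for the weighted mode), and the security of the extended autorecovery protocol with failure probability $\Pr[AF_{\systwo}] \leq k\Pr[AF]$. First I would observe that liveness of the underlying market means every valid transaction broadcast at round $t$ is recorded within $t+u$; since \systwo only changes \emph{where} a transaction is processed (via the chain-indicator prefix), I must show that each destination — the mainchain or a module sidechain — still records its assigned traffic within a bounded delay, and that the syncing/pruning machinery does not introduce an unbounded stall.

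The key steps, in order, would be: (1) \textbf{Per-sidechain liveness.} Each module sidechain is managed by a committee running a secure PBFT-based protocol; by the honest-supermajority guarantee of the election (at most $f$ malicious in $3f+2$, with $2f+2$ votes sufficient), a valid transaction in a sidechain's mempool appears in a meta-block within a bounded number of sidechain rounds, and is final upon appearing there. (2) \textbf{Committee failure does not break liveness.} When a primary committee misbehaves, view-change handles a bad leader and backup committees step in; the probability that \emph{all} $\kappa+1$ committees across all $k$ sidechains fail is at most $k\Pr[AF]$, which is negligible for appropriately sized committees — so with overwhelming probability some honest committee resumes block production after a bounded recovery delay. (3) \textbf{Mainchain liveness and syncing.} The mainchain runs its own secure consensus, so non-service transactions and sync-transactions are recorded within the mainchain's liveness bound; mass-syncing ensures that even after a rollback a committee issues sync-transactions for its epoch and all impacted prior epochs, so the mainchain state catches up within a bounded delay and pruning of stale meta-blocks is triggered — crucially, pruning happens only \emph{after} a sync-transaction is confirmed, so it never removes data needed to process pending traffic. (4) \textbf{Inter-module stalls are bounded.} When a critical module (e.g., dispute) is interrupted, dependent modules mine empty meta-blocks until it recovers; since recovery of the critical module completes with overwhelming probability in bounded time (by step 2), the induced delay on dependent modules is also bounded, and the mainchain meanwhile keeps growing with empty summaries — so no chain stalls indefinitely. (5) \textbf{No hidden blocking from cross-sidechain dependencies.} By the Claim (zero cross-sidechain transactions under proper module identification), no transaction waits on another sidechain to be processed; any needed state is already available on the mainchain or in other sidechains' blocks, so a committee never blocks waiting for external input.

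Combining these, every valid transaction — whether mainchain- or sidechain-destined — is recorded within a bounded delay with overwhelming probability, and every chain in \systwo keeps growing, so liveness of the underlying market is preserved. The main obstacle I anticipate is step (4): bounding the inter-module waiting time rigorously. One must argue that the "wait for the critical module to recover" loop terminates with overwhelming probability within a concrete time bound, which requires composing the autorecovery failure bound with the timeout parameter $\eta$ and showing the delays do not cascade or compound across a chain of module dependencies (a module waiting on a module waiting on another). I would handle this by assuming the module-criticality hierarchy is a DAG of bounded depth (guaranteed by proper module identification, which already rules out cross-module cycles) and applying a union bound over the finitely many modules and backup committees, so the total recovery delay is bounded by the depth of the hierarchy times the per-module recovery bound, still negligible-failure and finite.
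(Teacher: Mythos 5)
Your plan is correct and rests on the same three pillars the paper uses --- security of the PBFT-based consensus, honest-supermajority committee election, and the autorecovery protocol with the union-bounded failure probability $k\Pr[AF]$ --- but the decomposition is different: you enumerate by system component (per-sidechain progress, committee failover, mainchain syncing, inter-module stalls, cross-sidechain dependencies), whereas the paper enumerates by threat (DoS via transaction exclusion, violation of sidechain liveness, and violation of public verifiability), dispatching each with essentially the arguments you give. Your route buys two things the paper does not make explicit: step (5) ties the no-blocking property to the zero-cross-sidechain-transaction Claim, and your treatment of step (4) --- bounding cascading inter-module waits by viewing the criticality hierarchy as a bounded-depth DAG and composing the per-module recovery bound with the timeout $\eta$ --- is more careful than the paper, which only asserts that the autorecovery protocol resolves the ``mine empty blocks for an undetermined period'' concern. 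Conversely, you omit the paper's third threat category, \emph{violating public verifiability}: the paper argues under the liveness lemma that pruning meta-blocks only after the corresponding sync-transaction is confirmed, together with consensus guaranteeing only valid summaries are accepted, preserves the ability of anyone to verify the permanent summaries. You do note the pruning-after-confirmation ordering in step (3), but only to argue that pruning cannot stall pending traffic, not to preserve verifiability of the pruned history; if you intend your proof to replace the paper's, you should either add that argument or justify relocating it to the safety lemma, where it arguably belongs.
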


\begin{proof}
    As \systwo employs committees to manage module sidechains, we identify the following threats to liveness:
    \begin{itemize}
        \item DoS attacks: A sidechain's committee excludes particular transactions from being published.
        \item Violating sidechain liveness: The committee managing important modules does not add blocks to its sidechain, or the leader does not issue a $\tx_\sync$. This will violate the liveness of this sidechain. Also, not adding blocks may cause other modules to mine empty blocks for an undetermined period.
        \item Violating public verifiability: This consists of all threats resulting from workload sharing and blockchain pruning on the public verifiability of the records on the side and main chains.
    \end{itemize}

    \emph{DoS attacks} are unfeasible since we employ a secure PBFT-based consensus with rotating epoch committees and leaders. Thus, it is infeasible for an attacker to hold full control of the transaction inclusion process on the long term.

    \textit{Violating sidechain liveness} is addressed by the use of a secure PBFT-based consensus and the autorecovery protocol of \systwo. Any disruption to liveness would be detected by the committee members (if the cause is the committee leader) and will be solved by a leader change process; or by having a backup committee step in (in case of an unresponsive committee). The loss of sync transactions (due to rollbacks) or due to having a leader that does not issue them, is resolved by the mass-syncing process. By configuring the committee size and the autorecovery protocol parameters properly, system recovery is guaranteed with overwhelming probability.

    \textit{Violating public verifiability} is mitigated by the use of a secure PBFT-based consensus and our autorecovery protocol, which guarantee that only correct transactions are processed and published on the sidechains and that correct summaries are produced. Also, the mass-syncing mechanism will handle any interruptions that may happen to the syncing process, thus guaranteeing that no summaries will be lost. Hence, the mainchain will be synced based on the workload processed on the sidechain. Since meta-blocks are pruned after their corresponding $\tx_\sync$ is confirmed on the mainchain, anyone can verify that the (permanent) summaries are valid. By the security of consensus, it is guaranteed that only valid summaries are accepted and their correctness will not change once meta-blocks are pruned. All these factors preserve the public verifiability of market operation.  

    Thus, \systwo preserves the liveness of the underlying resource market.
\end{proof}

\end{document}